\documentclass[letterpaper]{article}
\usepackage[papersize={8.5in,11in},top=1.3in,left=1.3in,right=1.3in,bottom=1.3in]{geometry}
\setcounter{secnumdepth}{1}

\usepackage[final]{graphicx}
\usepackage{amssymb,amsthm}
\usepackage{amsmath}
\usepackage[boxed]{algorithm2e}
\usepackage{appendix}
\usepackage{bm}
\usepackage{url}
\usepackage{hyperref}
\usepackage{multirow}

\newenvironment{definition}[1][Definition]{\begin{trivlist}
\item[\hskip \labelsep {\bfseries #1}]}{\end{trivlist}}

\newtheorem{theorem}{Theorem}[section]

\newtheorem{lemma}[theorem]{Lemma}
\newtheorem{claim}[theorem]{Claim}

\newtheorem{corollary}[theorem]{Corollary}

\newcommand\floor[1]{\lfloor#1\rfloor}
\newcommand\ceil[1]{\lceil#1\rceil}

\pagestyle{plain}

\title{Blind, Greedy, and Random: Ordinal Approximation Algorithms for Matching and Clustering\thanks{A subset of the results in this paper appeared in the Proceedings of AAAI'$16$}}
\author{Elliot Anshelevich \and Shreyas Sekar}
\begin{document}
\maketitle

\begin{abstract}
We study Matching and other related problems in a partial information setting where the agents' utilities for being matched to other agents are hidden and the mechanism only has access to ordinal preference information. Our model is motivated by the fact that in many settings, agents cannot express the numerical values of their utility for different outcomes, but are still able to rank the outcomes in their order of preference. Specifically, we study problems where the ground truth exists in the form of a weighted graph, and look to design algorithms that approximate the true optimum matching using only the preference orderings for each agent (induced by the hidden weights) as input. If no restrictions are placed on the weights, then one cannot hope to do better than the simple greedy algorithm, which yields a half optimal matching. Perhaps surprisingly, we show that by imposing a little structure on the weights, we can improve upon the trivial algorithm significantly: we design a $1.6$-approximation algorithm for instances where the hidden weights obey the metric inequality. Using our algorithms for matching as a black-box, we also design new approximation algorithms for other closely related problems: these include a a $3.2$-approximation for the problem of clustering agents into equal sized partitions, a $4$-approximation algorithm for Densest $k$-subgraph, and a $2.14$-approximation algorithm for Max TSP. These results are the first non-trivial ordinal approximation algorithms for such problems, and indicate that we can design robust algorithms even when we are agnostic to the precise agent utilities.

\end{abstract}

\section{Introduction}
Consider the \emph{Maximum Weighted Matching} (MWM) problem, where the input is an undirected complete graph $G=(\mathcal{N},E)$ and the weight of an edge $w(i,j)$ represents the utility of matching agent $i$ with agent $j$. The objective is to form a matching (collection of disjoint edges) that maximizes the total utility of the agents. The problem of matching agents and/or items is at the heart of a variety of diverse applications and it is no surprise that this problem and its variants have received extensive consideration in the algorithmic literature~\cite{lovasz2009matching}. Perhaps, more importantly, maximum weighted matching is one of the few non-trivial combinatorial optimization problems that can be solved optimally in poly-time~\cite{edmonds1965paths}. In comparison, we study the MWM problem in a partial information setting where the lack of precise knowledge regarding agents' utilities acts as a barrier against computing optimal matchings, efficiently or otherwise.

More generally, in this work, we also look at other graph optimization problems such as clustering in a similar partial information setting, where optimal computation is preemptively stymied by the NP-Hardness of the problem (even in the full information case). This includes the problem of clustering agents to maximize the total weight of edges inside each cluster (Max $k$-sum), Densest $k$-subgraph, and the max traveling salesman problem. Furthermore, for the majority of this work, we assume that the edge weights obey the triangle inequality, since in many important applications it is natural to expect that the weights have some geometric structure. Such structure occurs, for instance, when the agents are points in a metric space and the weight of an edge is the distance between the two endpoints.

\subsubsection*{Partial Information - Ordinal Preferences}
A crucial question in algorithm and mechanism design is: ``How much information about the agent utilities does the algorithm designer possess?". The starting point for the rest of our paper is the observation that in many natural settings, it is unreasonable to expect the mechanism to know the exact weights of the edges in $G$~\cite{boutilierCHLPS15,procacciaR06}. For example, when pairing up students for a class project, it may be difficult to precisely quantify the synergy level for every pair of students; ordinal questions such as `who is better suited to partner with student $x$: $y$ or $z$?' may be easier to answer. Such a situation would also arise when the graph represents a social network of agents, as the agents themselves may not be able to express `exactly how much each friendship is worth', but would likely be able to form an ordering of their friends from best to worst. This phenomenon has also been observed in social choice settings, in which it is much easier to obtain ordinal preferences instead of true agent utilities~\cite{anshelevichBP15,procacciaR06}.

Motivated by this, we consider a model where for every agent $i \in \mathcal{N}$, we {\em only have access to a preference ordering} among the agents in $\mathcal{N} - \left\lbrace i \right\rbrace$ so that if $w(i,j) > w(i,k)$, then $i$ prefers $j$ to $k$. The common approach in Learning Theory while dealing with such ordinal settings is to estimate the `true ground state' based on some probabilistic assumptions on the underlying utilities~\cite{ohS14,soufianiPX12}. In this paper we take a different approach, and instead focus on the more demanding objective of designing \emph{robust algorithms}, i.e., algorithms that provide good performance guarantees no matter what the underlying weights are.

Despite the large body of literature on computing matchings in settings with preference orderings, there has been much less work on quantifying the quality of these matchings. As is common in much of \emph{social choice theory}, most papers (implicitly) assume that the underlying utilities cannot be measured or do not even exist, and hence there is no clear way to define the quality of a matching~\cite{abrahamIKM07,bhalgatCK11a,gusfield1989stable}. In such papers, the focus therefore is on computing matchings that satisfy normative properties such as stability or optimize a measure of efficiency that depends only on the preference orders, e.g., average rank. On the other hand, the literature on approximation algorithms usually follows the \emph{utilitarian} approach~\cite{harsanyi1976cardinal} of assigning a numerical quality to every solution; the presence of input weights is taken for granted. Our work combines the best of both worlds: we do {\em not} assume the availability of numerical information (only its latent existence), and yet our approximation algorithms must compete with algorithms that know the true input weights.

\subsubsection*{Model and Problem Statements}
For all of the problems studied in this paper, we are given as input a set $\mathcal{N}$ of points or agents with $|\mathcal{N}|=N$, and for each $i \in \mathcal{N}$, a strict preference ordering $P_i$ over the agents in $\mathcal{N} - \left\lbrace i \right\rbrace.$ We assume that the input preference orderings are derived from a set of underlying \textbf{hidden} edge weights ($w(x,y)$ for edge $x,y \in \mathcal{N}$), which satisfy the triangle inequality, i.e., for $x,y,z \in \mathcal{N}$, $w(x,y) \leq w(x,z) + w(y,z)$. These weights are considered to represent the \emph{ground truth}, which is not known to the algorithm. We say that the preferences $P$ are induced by weights $w$ if $\forall x,y,z \in \mathcal{N}$, if $x$ prefers $y$ to $z$, then $w(x,y) \geq w(x,z)$.
Our framework captures a number of well-motivated settings (for matching and clustering problems); we highlight two of them below.

\begin{enumerate}
\item\textbf{Forming Diverse Teams }
Our setting and objectives align with the research on \emph{diversity maximization algorithms}, a topic that has gained significant traction, particularly with respect to forming diverse teams that capture distinct perspectives~\cite{indykMMM14,marcolinoJT13}. In these problems, each agent corresponds to a point in a metric space: this point represents the agents's beliefs, skills, or opinions. Given this background, our matching problem essentially reduces to selecting diverse teams (of size two) based on different diversity goals, since points that are far apart ($w(x,y)$ is large) contribute more to the objective. For instance, one can imagine a teacher pairing up her students who possess differing skill sets or opinions for a class project, which is captured by the maximum weighted matching problem. In section~\ref{sec:applications}, we tackle the problem of forming diverse teams of arbitrary sizes by extending our model to encompass clustering, and team formation.


\item \textbf{Friendship Networks} In structural balance theory~\cite{davis1977clustering}, the statement that \emph{a friend of a friend is my friend} is folklore; this phenomenon is also exhibited by many real-life social networks~\cite{goodreau2009birds}. More generally, we can say that a graph with continuous weights has this property if $w(x,y) \geq \alpha[w(x,z) + w(y,z)]$ $\forall x,y,z$, for some suitably large $\alpha \leq \frac{1}{2}$. Friendship networks bear a close relationship to our model; in particular every graph that satisfies the friendship property for $\alpha \geq \frac{1}{3}$ must have metric weights, and thus falls within our framework.
\end{enumerate}

\noindent
In this paper our main goal is to form \textbf{ordinal approximation algorithms} for weighted matching problems, which we later extend towards other problems. An algorithm $A$ is said to be ordinal if
it only takes preference orderings $(P)_{i \in \mathcal{N}}$ as input (and not the hidden numerical weights $w$). It is an $\alpha$-approximation algorithm if for all possible weights $w$, and the corresponding induced preferences $P$, we have that $\frac{OPT(w)}{A(P)} \leq \alpha$. Here $OPT(w)$ is the total value of the maximum weight solution with respect to $w$, and $A(P)$ is the value of the solution returned by the algorithm for preferences $(P)_{i \in \mathcal{N}}$. In other words, such algorithms produce solutions which are always a factor $\alpha$ away from optimum, {\em without actually knowing} what the weights $w$ are.

In the rest of the paper, we focus primarily on the \textbf{Maximum Weighted Matching}(MWM) problem where the goal is to compute a matching to maximize the total (unknown) weight of the edges inside. A close variant of the MWM problem that we will also study is the Max $k$-Matching (M$k$-M) problem where the objective is to select a maximum weight matching consisting of at most $k$ edges. In addition, we also provide ordinal approximations for the following problems:

\begin{description}
\item [Max $k$-Sum] Given an integer $k$, partition the nodes into $k$ disjoint sets $(S_1, \ldots, S_k)$ of equal size in order to maximize $\sum_{i=1}^k \sum_{x,y \in S_i}w(x,y)$. (It is assumed that $N$ is divisible by $k$). When $k=N/2$, Max $k$-sum reduces to the Maximum Weighted Matching problem.

\item [Densest $k$-subgraph] Given an integer $k$, compute a set $S \subseteq \mathcal{N}$ of size $k$ to maximize the weight of the edges inside $S$.

\item [Max TSP] In the maximum traveling salesman problem, the objective is to compute a tour $T$ (cycle that visits each node in $\mathcal{N}$ exactly once) to maximize $\sum_{(x,y) \in T}w(x,y)$.
\end{description}


\subsubsection*{Challenges and Techniques}

We describe the challenges involved in designing ordinal algorithms through the lens of the Maximum Weighted Matching problem. First, different sets of edge weights may give rise to the same preference ordering and moreover, for each of these weights, the optimum matching can be different.
Therefore, unlike for the full information setting, no algorithm (deterministic or randomized) can compute the optimum matching using only ordinal information. More generally, the restriction that only ordinal information is available precludes almost all of the well-known algorithms for computing a matching. So, what kind of algorithms use only preference orderings? One algorithm which can still be implemented is a version of the extremely popular greedy matching algorithm, in which we successively select pairs of agents who choose each other as their top choice. Another trivial algorithm is to choose a matching at random: this certainly does not require any numerical information! It is not difficult to show that both these algorithms actually provide an ordinal $2$-approximation for the maximum weight matching. The main result of this paper, however, is that by interleaving these basic greedy and random techniques in non-trivial ways, it is actually possible to do much better, and obtain a $1.6$-approximation algorithm. Moreover, these techniques can further be extended and tailored to give ordinal approximation algorithms for much more general problems.

\subsection*{Our Contributions}

\begin{table}
\centering
\begin{tabular}{|l|c|c|l|l|}
\hline
\textbf{Problem} & \textbf{Full Info} & \multicolumn{2}{|c|}{\textbf{Our Results} (Ordinal Bounds)}\\
& & Deterministic & Randomized\\
\hline
Max Weighted Matching & $1$~\cite{edmonds1965paths} & $2$ & $1.6$ \\
\hline
Max $k$-Matching & $1$~\cite{lovasz2009matching} & $2$ & $2$ \\
\hline
Max $k$-Sum & $2$~\cite{feo1990class,hassin1997approximation} & $4$ & $3.2$  \\
\hline
Densest $k$-Subgraph & $2$~\cite{birnbaum2009improved,hassin1997approximation} &  $4$ & $4$ \\
\hline
Max TSP &$\frac{8}{7}$~\cite{kowalik2009deterministic} & $3.2$ & $2.14$\\
\hline
\end{tabular}
\caption{A Comparison of the approximation factors obtained by ordinal approximation algorithms and previous results for the full information metric setting. All of our results for the non-matching problems are obtained by using our algorithms (deterministic and randomized) for matching as a black-box to construct solutions for the respective problems.} 
\label{table_results}
\end{table}

Our main results are summarized in Table \ref{table_results}. All of the problems that we study have a rich history of algorithms for the full information setting. As seen in the table, our ordinal algorithms provide approximation factors that are close to the best known for the full information versions. In other words, we show that it is possible to find good solutions to such problems even without knowing any of the true weights, using only ordinal preference information instead.

Our central result in this paper is an ordinal $1.6$-approximation algorithm for max-weight matching; this is obtained by a careful interleaving of greedy and random matchings. We also present a deterministic $2$-approximation algorithm for Max $k$-Matching. Note that Max $k$-Matching for $k=\frac{N}{2}$ is the same as the MWM problem. 

We also provide a general way to use matching algorithms as a black-box to form ordinal approximation algorithms for other problems: given an ordinal $\alpha$-approximation for max-weight matching, we show how to obtain a $2\alpha$, $2\alpha$, and $\frac{4}{3}\alpha$ approximation for Max $k$-Sum, Densest $k$-Subgraph, and Max-TSP respectively. Plugging in the appropriate values of $\alpha$ for deterministic and randomized algorithms yields the results in Table~\ref{table_results}.

In total, our results indicate that for matching and clustering problems with metric preferences, ordinal algorithms perform almost as well as algorithms which know the underlying metric weights.

\noindent\textbf{Techniques:} More generally, one of our main contributions is a framework that allows the design of algorithms for problems where the (metric) weights are hidden. Our framework builds on two simple techniques, \emph{greedy} and \emph{random}, and establishes an interesting connection between graph density, matchings, and greedy edges. We believe that this framework may be useful for designing ordinal approximation algorithms in the future.

\subsection*{Related Work}
Broadly speaking, the cornucopia of algorithms proposed in the matching literature belong to one of two classes:  $(i)$ Ordinal algorithms that ignore agent utilities, and focus on (unquantifiable) axiomatic properties such as \emph{stability}, and $(ii)$ Optimization algorithms where the numerical utilities are fully specified. From our perspective, algorithms belonging to the former class, with the exception of \emph{Greedy}, do not result in good approximations for the hidden optimum, whereas the techniques used in the latter (e.g.,~\cite{drakeH03,duanP10}) depend heavily on \emph{improving cycles} and thus, are unsuitable for ordinal settings. A notable exception to the above dichotomy is the class of optimization problems studying \emph{ordinal measures of efficiency}~\cite{abrahamIKM07,chakrabartyS14}, for example, the average rank of an agent's partner in the matching.  Such settings often involve the definition of `new utility functions' based on given preferences, and thus are fundamentally different from our model where preexisting cardinal utilities give rise to ordinal preferences.

The idea of preference orders induced by metric weights (or a more general utility space) was first considered in the work of Irving et al.~\cite{irving1987efficient}. Subsequent work has focused mostly on analyzing the greedy algorithm or on settings where the agent utilities are explicitly known~\cite{arkinBEOMP09,emekLW15}. Most similar to our work is the recent paper by Filos-Ratsikas et al.~\cite{filosRatsikasF014}, who prove that for one-sided matchings, no ordinal algorithm can provide an approximation factor better than $\Theta({\sqrt{N}})$. In contrast, for two-sided matchings, there is a simple (greedy) $2$-approximation algorithm even when the hidden weights do not obey the metric inequality.

As with Matching, all of the problems studied in this paper have received considerable attention in the literature for the full information case with metric weights. In particular, metric Densest Subgraph (also known as \emph{Maximum Dispersion} or \emph{Remote Clique}) is quite popular owing to its innumerable applications~\cite{birnbaum2009improved,bhattacharyaGM11}. The close ties between the optimum solutions for Matching and Max $k$-sum, and Densest $k$-subgraph was first explored by Feo and Khellaf~\cite{feo1990class}, and later by Hassin et al.~\cite{hassin1997approximation}; our black-box mechanism to transform arbitrary matchings into solutions for other problems can be viewed as a generalization of their results. In addition, we also provide improved algorithms for these problems (see Table~\ref{table_results}) that do not depend on matchings; for Max $k$-sum, the bound that we obtain for the ordinal setting is as good as that of the best-known algorithm for the full information setting.

\noindent\textbf{Distortion in Social Choice} Our work is similar in motivation to the growing body of research studying settings where the voter preferences are induced by a set of hidden utilities~\cite{anshelevichBP15,boutilierCHLPS15,caragiannisP11,marcolino2014give,procacciaR06}. The voting protocols in these papers are essentially ordinal approximation algorithms, albeit for a very specific problem of selecting the utility-maximizing candidate from a set of alternatives.

Finally, other models of incomplete information have been considered in the Matching literature, most notably Online Algorithms~\cite{kalyanasundaramP93} and truthful algorithms (for strategic agents)~\cite{dughmiG10}. Given the strong motivations for preference rankings in settings with agents, it would be interesting to see whether algorithms developed for other partial information models can be extended to our setting.

\section{Framework for Ordinal Matching Algorithms}
\label{sec:framework}
In this section, we present our framework for developing ordinal approximation algorithms and establish tight upper and lower bounds on the performance of algorithms that select matching edges either greedily or uniformly at random. As a simple consequence of this framework, we show that the algorithms that sequentially pick all of the edges greedily or uniformly at random both provide $2$-approximations to the maximum weight matching. In the following section, we show how to improve this performance by picking some edges greedily, and some randomly. Finally, we remark that for the sake of convenience and brevity, we will often assume that $N$ is even, and sometimes that it is also divisible by 3. As we discuss in the Appendix, our results still hold if this is not the case, with only minor modifications.

\subsection{Fundamental Subroutine: Greedy }
We begin with Algorithm~\ref{alg_greedy} that describes a simple greedy procedure for outputting a matching: at each stage, the algorithm picks one edge $(x,y)$ such that the both $x$ and $y$ prefer this edge to all of the other available edges. We now develop some notation required to analyze this procedure.

\begin{definition}{(Undominated Edges)}
Given a set $E$ of edges, $(x,y) \in E$ is said to be an undominated edge if for all $(x,a)$ and $(y,b)$ in $E$, $w(x,y) \geq w(x,a)$ and $w(x,y) \geq w(y,b)$.
\end{definition}

\begin{algorithm}[htbp]
\SetKwInOut{Input}{input}\SetKwInOut{Output}{output}
\Input{Edge set $E$, preferences $P(\mathcal{N})$, $k$}
\Output{Matching $M_G$ with $k$ edges}
\While{$E$ is not empty (AND) $|M_G|<k$  }{
pick an undominated edge $e=(x,y)$ from $E$ and add it to $M_G$\;
remove all edges containing $x$ or $y$ from $E$\;
}
\caption{Greedy $k$-Matching Algorithm}
\label{alg_greedy}
\end{algorithm}

Given a set $E$, let us use the notation $E_\top$ to denote the set of undominated edges in $E$. Finally, we say that an edge set $E$ is complete if $\exists$ some $S \subseteq \mathcal{N}$ such that $E$ is the complete graph on the nodes in $S$ (minus the self-loops). We make the following two observations regarding undominated edges
\begin{enumerate}
\item Every edge set $E$ has at least one undominated edge. In particular, any maximum weight edge in $E$ is obviously an undominated edge.
\item Given an edge set $E$, one can efficiently find at least one edge in $E_\top$ {\em using only the ordinal preference information}. A naive algorithm for this is as follows. Consider starting with an arbitrary node $x$. Let $(x,y)$ be its first choice out of all the edges in $E$ (i.e., $y$ is $x$'s first choice of all the nodes it has an edge to in $E$). Now consider $y$'s first choice. If it is $x$, then the edge $(x,y)$ must be undominated, as desired. If instead it is some $z\neq x$, then continue this process with $z$. Eventually this process must cycle, giving us a cycle of nodes $x_0,x_1,\ldots,x_{\l-1}$ such that $x_i$ is the top preference of $x_{i-1}$, taken with respect to $\mod \l$. This means that all edges in this cycle have equal weight, even though we do not know what this weight is, since $x_i$ preferring $x_{i+1}$ over $x_{i-1}$ means that $w(x_i,x_{i+1})\geq w(x_i,x_{i-1})$. Moreover, the edge weights of all edges in this cycle must be the highest ones incident on the nodes in this cycle, since they are all top preferences of the nodes. Therefore, all edges in this cycle are undominated, as desired.
\end{enumerate}

In general, an edge set $E$ may have multiple undominated edges that are not part of a cycle. Our first lemma shows that these different edges are comparable in weight.

\begin{lemma}
\label{lem_undomweight}
Given a complete edge set $E$, the weight of any undominated edge is at least half as much as the weight of any other edge in $E$, i.e., if $e=(x,y) \in E_\top$, then for any $(a,b) \in E$, we have $w(x,y)\geq \frac{1}{2}w(a,b)$. This is true even if $(a,b)$ is another undominated edge.
\end{lemma}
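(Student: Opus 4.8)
The plan is to bound the weight of an arbitrary edge $(a,b)\in E$ directly in terms of $w(x,y)$ using a single application of the triangle inequality, exploiting the fact that completeness of $E$ guarantees that all the relevant ``detour'' edges are present in $E$ and hence are dominated by $(x,y)$.

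First I would dispose of the degenerate cases in which $(a,b)$ shares an endpoint with $(x,y)$. If, say, $a=x$, then $(a,b)=(x,b)$ is an edge of $E$ incident to $x$, so the definition of an undominated edge gives $w(a,b)=w(x,b)\leq w(x,y)$ immediately, which is even stronger than the claimed bound; the cases $a=y$, $b=x$, $b=y$ are symmetric, and the case $\{a,b\}=\{x,y\}$ is trivial.

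For the main case, where $a$ and $b$ are both distinct from $x$ and $y$: since $E$ is complete on some vertex set $S$ with $a,b,x\in S$, both edges $(a,x)$ and $(x,b)$ lie in $E$. The triangle inequality applied at the vertex $x$ yields $w(a,b)\leq w(a,x)+w(x,b)$. Because $(x,y)$ is undominated and $(a,x),(x,b)$ are edges of $E$ incident to $x$, we have $w(a,x)\leq w(x,y)$ and $w(x,b)\leq w(x,y)$. Combining these gives $w(a,b)\leq 2w(x,y)$, i.e.\ $w(x,y)\geq \frac{1}{2}w(a,b)$, as required. Note the argument never uses any special property of $(a,b)$ beyond its membership in $E$, so it applies verbatim when $(a,b)$ is itself another undominated edge.

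I do not expect any serious obstacle here: the only points requiring care are (i) invoking completeness to ensure the detour edges $(a,x)$ and $(x,b)$ genuinely exist in $E$, so that the undominated property of $(x,y)$ can legitimately be applied to them, and (ii) observing that the loss of exactly a factor of $2$ is precisely what the triangle inequality costs when routing $a\to b$ through the single intermediate vertex $x$, which is what makes the constant $\frac{1}{2}$ tight.
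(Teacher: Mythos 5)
Your proof is correct and follows the same route as the paper's: route $(a,b)$ through $x$ via the triangle inequality and bound $w(a,x)$ and $w(b,x)$ by $w(x,y)$ using the undominated property, which applies because completeness guarantees those edges are in $E$. The only addition is your explicit treatment of the degenerate shared-endpoint cases, which the paper leaves implicit.
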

\begin{proof}
Since $(x,y)$ is an undominated edge, and since $E$ is a complete edge set this means that $w(x,y) \geq w(x,a)$, and $w(x,y) \geq w(x,b)$. Now, from the triangle inequality, we get $w(a,b) \leq w(a,x) + w(b,x) \leq 2w(x,y).$ \end{proof}

It is not difficult to see that when $k=\frac{N}{2}$, the output of Algorithm~\ref{alg_greedy} coincides with that of the extremely popular greedy algorithm that picks the maximum weight edge at each iteration, and therefore, our algorithm yields an ordinal $2$-approximation for the MWM problem. Our next result shows that the approximation factor holds even for Max $k$-Matching, for any $k$: this is not a trivial result because at any given stage there may be multiple undominated edges and therefore for $k < \frac{N}{2}$, the output of Algorithm~\ref{alg_greedy} no longer coincides with that the well known greedy algorithm. In fact, we show the following much stronger lemma,

\begin{lemma}
\label{lem_matchinggreedy}
Given $k=\alpha\frac{N}{2}$, and $k^*=\alpha^* \frac{N}{2}$, the performance of the greedy $k$-matching with respect to the optimal $k^*$-matching (i.e., $\frac{OPT(k^*)}{Greedy(k)}$) is given by,
\begin{enumerate}
\item $\displaystyle\max{(2,2\frac{\alpha^*}{\alpha})}$ \quad if $\alpha^* + \alpha < 1$
\item $\displaystyle\max{(2, \frac{\alpha^* + 1}{\alpha} -1)}$  ~ if $\alpha^* + \alpha \geq 1$
\end{enumerate}
\end{lemma}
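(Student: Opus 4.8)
The plan is to establish a structural comparison between the greedy $k$-matching $M_G$ and an optimal $k^*$-matching $M^*$, and to extract the two regimes from how these two matchings overlap. The key observation to exploit is Lemma~\ref{lem_undomweight}: each edge picked by the greedy algorithm, at the moment it is picked, is undominated in the \emph{current} complete edge set on the surviving nodes, so its weight is at least half the weight of any remaining edge — in particular at least half the weight of any edge of $M^*$ whose endpoints are both still available. I would process the greedy edges $e_1,\dots,e_k$ in the order they are selected, and for each $e_t$ charge to it the $M^*$-edges that are ``killed'' when $e_t$'s endpoints are removed. Since $e_t$ removes two nodes, it can kill at most two edges of $M^*$, and by the undominated property each such killed edge has weight at most $2w(e_t)$; this is essentially the classical greedy argument that $2\,\mathrm{Greedy}(k) \geq w(M^* \text{ restricted to edges killed by the first } k \text{ greedy picks})$.

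The refinement needed for the statement is to account for the $M^*$-edges that are \emph{not} killed within the first $k$ greedy steps — these are exactly the edges of $M^*$ whose endpoints survive all $k$ greedy picks (this can only happen if the greedy algorithm terminates early because $E$ became empty, or, more relevantly, because we are counting toward a $k^*$-matching with $k^* > k$). Here I would do a counting argument on nodes: the $k$ greedy edges use $2k = \alpha N$ nodes, so at least $N - \alpha N = (1-\alpha)N$ nodes are untouched, and the $M^*$-edges lying entirely in this untouched set number at most $\min(k^*, (1-\alpha)N/2 \cdot 1) = \min\big(\alpha^*\tfrac{N}{2}, \tfrac{(1-\alpha)N}{2}\big)$. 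The two cases in the lemma correspond precisely to which term of this minimum is active: when $\alpha^* + \alpha < 1$ every $M^*$-edge could in principle avoid being killed, giving the bound driven by $\alpha^*/\alpha$; when $\alpha^* + \alpha \geq 1$ at most $(1-\alpha)\tfrac{N}{2}$ of the $M^*$-edges can survive, which after dividing by $\mathrm{Greedy}(k) \geq \tfrac{k}{?}\cdots$ and rearranging yields $\frac{\alpha^*+1}{\alpha} - 1$. I would also need the baseline lower bound $\frac{OPT(k^*)}{\mathrm{Greedy}(k)} \geq$ something that is always at least the $2$ appearing in the $\max$ — this comes from a worst case where $k^* = k$ (or $k^* < k$) and the greedy edges are each only half as heavy as the matched $M^*$-edge, combined with the observation that $\mathrm{Greedy}(k)$ weakly increases in $k$ so shrinking $k^*$ below $k$ cannot help.

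The main obstacle I anticipate is making the charging scheme tight enough to get the exact constants rather than a weaker bound: one has to be careful that the ``at most two killed edges per greedy pick, each of weight $\leq 2w(e_t)$'' bound and the ``surviving edges'' bound are not double-counted, and that the surviving edges are themselves bounded in weight — but a surviving $M^*$-edge lives in the untouched node set, and since greedy never had to pick there only because it ran out of budget $k$, such an edge has weight at most $2w(e_k)$ as well (it was available throughout and dominated by each undominated pick). Converting the resulting inequality $w(M^*) \leq 2\,\mathrm{Greedy}(k) + 2 w(e_k)\cdot(\text{number of surviving edges})$ together with $\mathrm{Greedy}(k) \geq k\, w(e_k)$... (more precisely $\mathrm{Greedy}(k) \geq \sum_t w(e_t)$ and $w(e_t)$ is non-increasing, so $\mathrm{Greedy}(k)\geq k\,w(e_k)$) into the claimed ratios is then routine algebra, and I would present the arithmetic for the two cases separately, verifying in each that the resulting expression dominates $2$ exactly on the stated parameter range and that the $\max$ with $2$ covers the complementary range.
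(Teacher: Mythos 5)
Your overall architecture matches the paper's: charge the edges of the optimal $k^*$-matching $M^*$ to greedy edges, split $M^*$ into edges that conflict with the greedy matching versus edges disjoint from it, and bound the number $m_2$ of disjoint (``surviving'') edges by $\min\bigl(k^*,\tfrac{N}{2}-k\bigr)$ to produce the two cases. The gap is that your charging scheme is too lossy to yield the stated constants. First, a local slip: you bound each killed $M^*$-edge by $2w(e_t)$ with up to two such edges per greedy edge, which gives $4\,\mathrm{Greedy}(k)$, not the $2\,\mathrm{Greedy}(k)$ you assert; the classical argument needs the sharper fact that a conflicting $M^*$-edge shares an endpoint with the first undominated greedy edge that touches it, so its weight is at most $w(e_t)$ (no factor $2$, no triangle inequality). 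But even after that repair, your scheme spends the full two-charges-per-greedy-edge budget on conflicting edges and then adds the surviving edges' charge on top, giving $w(M^*)\le\bigl(2+\tfrac{2m_2}{k}\bigr)\mathrm{Greedy}(k)$. In the benchmark case $\alpha^*=1$, $\alpha=\tfrac23$ (the one Theorem~\ref{thm_mainmatching} actually relies on), $m_2\le\tfrac N6$ and $k=\tfrac N3$, so this evaluates to $3$, whereas the lemma claims $2$. The missing idea is the paper's slot-transfer step: the $m_1$ conflicting edges need only $m_1$ unit slots in total, which can be concentrated on the $m_1/2$ \emph{heaviest} greedy edges, leaving the remaining $k-\tfrac{m_1}{2}$ greedy edges entirely free to each absorb one surviving edge at cost $2w(e)$ before any fractional overflow is spread uniformly. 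That is what produces $\max\bigl(2,\tfrac{m_2+k^*}{k}\bigr)$ rather than $2+\tfrac{2m_2}{k}$.

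A second, independent problem is your claim that the greedy edge weights are non-increasing, so that $\mathrm{Greedy}(k)\ge k\,w(e_k)$. Algorithm~\ref{alg_greedy} picks an arbitrary \emph{undominated} edge at each step, not a maximum-weight edge, and the paper explicitly notes that for $k<\tfrac N2$ its output need not coincide with classical greedy; successive picks can increase in weight, so $w(e_k)$ need not be the minimum. The statement you actually need is that a surviving $M^*$-edge $e^*$ was present in the edge set at \emph{every} greedy step, so $w(e^*)\le 2w(e_t)$ for all $t$ by Lemma~\ref{lem_undomweight}, and hence $w(e^*)\le\tfrac{2}{k}\sum_t w(e_t)=\tfrac{2}{k}\mathrm{Greedy}(k)$; this is exactly the uniform fractional charging the paper uses in its third phase. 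With both fixes your argument becomes the paper's proof.
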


Thus, for example, when $\alpha^*=1$, and $\alpha=\frac{2}{3}$, we get the factor of $0.5$, i.e., in order to obtain a half-approximation to the optimum perfect matching, it suffices to greedily choose two-thirds as many edges as in the perfect matching.

\begin{proof}
We show the claim via a charging argument where every edge in the optimum matching $M^*$ is charged to one or more edges in the greedy matching $M$. Specifically, we can imagine that each edge $e \in M$ contains a certain (not necessarily integral) number of slots $s_e$, initialized to zero, that measure the number of edges in $M^*$ charged to $e$. Our proof will proceed in the form of an algorithm: initially $U=M^*$ denotes the set of uncharged edges. In each iteration, we remove some edge from $U$, charge its weight to some edges in $M$ and increase the value of $s_e$ for the corresponding edges so that the following invariant always holds: $\sum_{e \in M}s_ew_e \geq \sum_{e^* \in M^* \setminus U} w_{e^*}.$ Finally, we can bound the performance ratio using the quantity $\max_{e \in M}s_e$.

We describe our charging algorithm in three phases. Before we describe the first phase, consider any edge $e^*=(a,b)$ in $M^*$. The edge must belong to one of the following two types.

\begin{enumerate}
\item (Type I) Some edge(s) consisting of $a$ or $b$ (both $a$ and $b$) are present in $M$.
\item (Type II) No edge in $M$ has $a$ or $b$ as an endpoint.
\end{enumerate}

Suppose that $M^*$ contains $m_1$ Type I edges, and $m_2$ Type II edges. We know that $m_1 + m_2 = k^*$. Also, let $T \subseteq M$ denote the top $\frac{m_1}{2}$ edges in $M$, i.e., the $\frac{m_1}{2}$ edges with the highest weight. In the first charging phase, we cover all the Type I edges using only the edges in $T$, and so that no more than two slots of each edge are required.

\begin{claim}
\label{clm_greedyfirstphase}
(First Phase) There exists a mechanism by which we can charge all Type I edges in $M^*$ to the edges in $T$ so that $\sum_{e \in T}s_e w_e \geq \sum_{e \in Type I(M^*)}w_e$ and for all $e \in T$, $s_e \leq 2$.
\end{claim}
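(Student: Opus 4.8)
Write $W_1:=\sum_{e\in\text{Type I}(M^*)} w_e$ for the total weight this phase must cover. The plan is to attach to every Type~I edge of $M^*$ a single ``witness'' edge of $M$ of no smaller weight, to note that each edge of $M$ is a witness of at most two Type~I edges, and then to exploit the factor‑$2$ slack allowed in the slots $s_e$ to push the whole charge onto the heaviest $\frac{m_1}{2}$ edges, namely $T$.

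For a Type~I edge $e^*=(a,b)\in M^*$, define its witness $\phi(e^*)\in M$ to be the edge chosen \emph{first} by Algorithm~\ref{alg_greedy} among the edges of $M$ incident to $a$ or to $b$; such an edge exists precisely because $e^*$ is Type~I. The first thing to check is that $w(\phi(e^*))\ge w(e^*)$: an edge of $E$ is deleted only when an incident edge is selected, so at the iteration in which $\phi(e^*)$ was picked no edge incident to $a$ or to $b$ had yet been deleted, and in particular $(a,b)$ was still present in $E$; since $\phi(e^*)$ is undominated in that edge set and shares an endpoint with $(a,b)$, the definition of an undominated edge gives $w(\phi(e^*))\ge w(a,b)$. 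Next, $\phi$ is at most $2$-to-$1$: if three distinct Type~I edges mapped to the same $e\in M$, two of them would share an endpoint of $e$, contradicting that $M^*$ is a matching.

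These two facts combine into a single weight inequality. We have
\[
W_1 \;\le\; \sum_{e^*\in\text{Type I}(M^*)} w\bigl(\phi(e^*)\bigr) \;=\; \sum_{e\in M}\bigl|\phi^{-1}(e)\bigr|\, w_e ,
\]
where the coefficients $|\phi^{-1}(e)|$ are integers in $\{0,1,2\}$ summing to $m_1$. Each Type~I edge contains at least one vertex of $V(M)$, and these representative vertices are distinct since $M^*$ is a matching, so $m_1\le|V(M)|=2k$; hence $T$, the $\tfrac{m_1}{2}$ heaviest edges of $M$, is well defined, and by a one‑line exchange argument any $\{0,1,2\}$-assignment with coefficient‑sum $m_1$ has weighted total at most that of the assignment putting coefficient $2$ on each edge of $T$. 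Therefore $W_1\le 2\sum_{e\in T} w_e$. Finally, set $s_e:=\frac{W_1}{\sum_{e'\in T}w_{e'}}$ for every $e\in T$ and $s_e=0$ elsewhere in this phase: this is $\le 2$ by the inequality just proved, while $\sum_{e\in T}s_e w_e=W_1=\sum_{e\in\text{Type I}(M^*)}w_e$ is exactly the asserted invariant.

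The only genuinely delicate step is the witness inequality $w(\phi(e^*))\ge w(e^*)$, which relies on $\phi(e^*)$ being the \emph{first} incident greedy edge so that $(a,b)$ is still available; the rearrangement bound $\sum_e|\phi^{-1}(e)|w_e\le 2\sum_{e\in T}w_e$ is elementary but is the crucial reason one can charge to the \emph{top} $\tfrac{m_1}{2}$ edges rather than to the arbitrary set of that size that $\phi$ might hit, and minor points such as the parity of $m_1$ or zero‑weight edges are absorbed by the paper's standing conventions.
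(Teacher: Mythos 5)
Your proof is correct and follows essentially the same route as the paper's: charge each Type I edge to the first greedy edge incident to one of its endpoints (using undominatedness of that edge at the moment of selection to get $w(\phi(e^*))\geq w(e^*)$), observe that this map is at most $2$-to-$1$, and then move the total charge onto the top $\frac{m_1}{2}$ edges. The only cosmetic difference is that you finish with a uniform fractional slot assignment on $T$ where the paper performs a discrete slot-transfer procedure; both yield the stated invariant with $s_e\leq 2$ on $T$ and $s_e=0$ elsewhere.
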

\begin{proof}
We begin by charging the Type I edges to arbitrary edges in $M$, and then transfer the slots that are outside $T$ to edges in $T$. Consider any Type I edge $e^*=(a,b)$: without loss of generality, suppose that $e=(a,c)$ is the first edge containing either $a$ or $b$ that was added to $M$ by the greedy algorithm. Since the greedy algorithm only adds undominated edges, we can infer that $w_e \geq w_{e^*}$ (or else $e$ would be dominated by $e^*$). Using this idea, we we charge the Type I edges as follows
\begin{quote}
(Algorithm: Phase I (Charging)) Repeat until $U$ contains no Type I edge: pick a type I edge $e^*=(a,b)$ from $U$. Suppose that $e=(a,c)$ is the first edge containing either $a$ or $b$ that was added to $M$. Since $w_e \geq w_{e^*}$, charge $e^*$ to $e$, i.e., increase $s_e$ by one and remove $e^*$ from $U$.
\end{quote}

At the end of the above algorithm, $U$ contains no type $I$ edge. Moreover, $\sum_{e \in M}s_e = m_1$ since every Type I edge requires only one slot. Finally, for every $e=(x,y) \in M$, $s_e \leq 2$. This is because any edge charged to $(x,y)$ must contain at least one of $x$ or $y$. Now, without altering the set of uncharged edges $U$, we provide a mechanism to transfer the slots to edges in $T$. The following procedure is based on the observation that for every $e, e' \in M$ such that $e' \in T$ and $e \notin T$, $w_{e'} \geq w_{e}$.

\begin{quote}
(Algorithm: Phase I (Slot Transfer)) Repeat until $s_e=0$ for every edge outside $T$: pick $e \notin T$ such that $s_e > 0$. Pick any edge $e' \in T$ such that $s_e < 2$. Transfer the edge originally charged to $e$ to $e'$, i.e., decrease $s_e$ by one and increase $s_{e'}$ by one.
\end{quote}

Notice that at the end of the above mechanism, $\sum_{e \in T}s_e = m_1$, $s_e \leq 2$ for all $e \in T$, and $s_e =0$ for all $e \in M\setminus T$.
\end{proof}

Now, consider any type II edge $e^*$. We make a strong claim: for every $e \in M$, $w_e \geq \frac{1}{2}w_{e^*}$. This follows from Lemma~\ref{lem_undomweight} since at the instant when $e$ was added to $M$, $e$ was an undominated edge in the edge set $E$ and $e^*$ was also present in the edge set. Therefore, each type II edge can be charged using two (unit) slots from any of the edges in $M$ (or any combination of them). We now describe the second phase of our charging algorithm that charges nodes only to edges in $M \setminus T$, recall that there $k - \frac{m_1}{2}$ such edges.

\begin{quote}
(Second Phase) Repeat until $s_e=2$ for all $e \in M \setminus T$ (or) until $U$ is empty: pick any arbitrary edge $e^*$ from $U$ and $e \in M \setminus T$ such that $s_e = 0$. Since $w_{e^*} \leq 2w_e$, charge $e^*$ using two slots of $e$, i.e., increase $s_e$ by two and remove $e^*$ from $U$.
\end{quote}

During the second phase, every edge in $M^*$ is charged to exactly (two slots of) one edge in $M \setminus T$. Therefore, the number of edges removed from $U$ during this phase is $\min{(m_2, k - \frac{m_1}{2})}$. Since the number of uncharged edges at the beginning of Phase I was exactly $m_2$, we conclude that the number of uncharged edges at the end of the second phase, i.e., $|U|$ is $\min{(0,m_2 - k + \frac{m_1}{2})}$. If $|U|=0$, we are done, otherwise we can charge the remaining edges in $U$ uniformly to all the edges in $M$ using a fractional number of slots, i.e.,

\begin{quote}
(Third Phase) Repeat until $U=\emptyset$: pick any arbitrary edge $e^*$ from $U$. Since $w_{e^*} \leq 2w_e$ for all $e\in M$, charge $e^*$ uniformly to all edges in $M$, i.e., increase $s_e$ by $\frac{2}{k}$ for every $e \in M$ and remove $e^*$ from $M^*$.
\end{quote}

Now, in order to complete our analysis, we need to obtain an upper bound for $s_e$ over all edges in $e$. Recall that at the end of phase II, $s_e \leq 2$ for all $e \in M$. In the third phase, $s_e$ increased by $\frac{2}{k}$ for every edge in $U$, and the number of edges in $U$ is $\min{(0,m_2 - k + \frac{m_1}{2})}$. Therefore, at the end of the third phase, we have that for every $e \in M$,

$$s_e \leq 2 + \frac{2}{k}\displaystyle[\min{(0,m_2 - k + \frac{m_1}{2})}.]$$

Since $m_1 + m_2 = k^*$, we can simplify the second term above and get
\begin{align}
\label{eqn_temp1}
s_e & \leq 2 + \frac{2}{k}\displaystyle[\min{(0,\frac{m_2}{2} + \frac{k^*}{2} - k})]\\
& =  2 + \min(0,\frac{m_2 + k^*}{k} - 2)] = \min(2,\frac{m_2+k^*}{k}).
\end{align}

How large can $m_2$ be? Clearly, $m_2 \leq k^*$. But a more careful bound can be obtained using the fact that the $m_2$ Type II edges have no node in common with any of the $k$ edges in $M$. But the total number of nodes is $N$, therefore, $2m_2 + 2k \leq N$ or $m_2 \leq \frac{N}{2} - k$. This gives us $m_2 \leq \min(k^*, \frac{N}{2}-k)$. Depending on what the minimum is, we get two cases:
\begin{enumerate}
\item Case I: $k^* \leq \frac{N}{2}-k$ or equivalently, $k^* + k \leq \frac{N}{2}$. Substituting $m_2 \leq k^*$ in Equation~\ref{eqn_temp1}, we get that for all $e$, $s_e \leq \min(2, \frac{2k^*}{k})$. Replacing $k$ by $\alpha\frac{N}{2}$ and $k^*$ by $\alpha^* \frac{N}{2}$, we get that when $\alpha + \alpha^* \leq 1$, $s_e \leq \min(2,2\frac{\alpha^*}{\alpha})$.

\item Case II: $k^* \geq \frac{N}{2}-k$ or equivalently $\alpha^* + \alpha \geq 1$. Substituting $m_2 \leq \frac{N}{2}-k$ in Equation~\ref{eqn_temp1}, we get that $s_e \leq \displaystyle\min(2,\frac{\frac{N}{2} + k^* -k}{k})$ or equivalently $s_e \leq \min(2, \frac{\alpha^*+1}{\alpha}-1)$.
\end{enumerate}
\end{proof}
Plugging in $k=k^*$ in the above lemma immediately gives us the following corollary.
\begin{corollary}
Algorithm~\ref{alg_greedy} is a deterministic, ordinal $2$-approximation algorithm for the Max $k$-Matching problem for all $k$, and therefore a $2$-approximation algorithm for the Maximum Weighted Matching problem.
\end{corollary}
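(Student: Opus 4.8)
The plan is to simply invoke Lemma~\ref{lem_matchinggreedy} with the target size equal to the budget, i.e.\ set $k^*=k$, which means $\alpha^*=\alpha$. First I would dispose of the case $\alpha^*+\alpha<1$ (equivalently $\alpha<\tfrac12$): here Lemma~\ref{lem_matchinggreedy} gives $\frac{OPT(k^*)}{Greedy(k)}\le \max\!\big(2,\,2\tfrac{\alpha^*}{\alpha}\big)=\max(2,2)=2$. Then I would handle the remaining case $\alpha^*+\alpha\ge 1$ (equivalently $\alpha\ge\tfrac12$): here the lemma gives $\frac{OPT(k^*)}{Greedy(k)}\le \max\!\big(2,\,\tfrac{\alpha^*+1}{\alpha}-1\big)=\max\!\big(2,\,\tfrac{\alpha+1}{\alpha}-1\big)=\max\!\big(2,\tfrac1\alpha\big)$, and since $\alpha\ge\tfrac12$ implies $\tfrac1\alpha\le 2$, this is again $2$. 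Thus in all cases the greedy $k$-matching is within a factor $2$ of the optimal $k$-matching.

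Next I would observe that Algorithm~\ref{alg_greedy} is in fact an \emph{ordinal} algorithm: each iteration only needs to produce one undominated edge of the current (complete) edge set, and the second observation preceding Lemma~\ref{lem_undomweight} shows this can be done using the preference orderings alone (follow top-preference pointers until a cycle or a mutual-top pair appears). Hence no numerical weights are consulted, and the procedure runs in polynomial time. Finally, the Maximum Weighted Matching problem is exactly Max $k$-Matching with $k=\tfrac N2$ (so $\alpha=1$, landing in the second case with bound $\tfrac1\alpha=1\le 2$), giving the stated $2$-approximation for MWM as well.

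I do not expect any real obstacle here, since the corollary is a pure substitution into Lemma~\ref{lem_matchinggreedy}; the only point requiring a word of care is the boundary $\alpha=\tfrac12$, where one must note that whichever of the two cases of the lemma one applies, the resulting bound is $2$, so the statement is unaffected. (If one prefers to be careful about $N$ not being even, one appeals to the Appendix remark that the results persist with minor modifications.)
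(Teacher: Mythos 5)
Your proposal is correct and is essentially the paper's own argument: the paper proves the corollary by simply plugging $k=k^*$ into Lemma~\ref{lem_matchinggreedy}, and your case analysis ($\max(2,2)=2$ when $\alpha<\tfrac12$ and $\max(2,\tfrac1\alpha)=2$ when $\alpha\ge\tfrac12$) just spells out that substitution explicitly. The additional remarks on ordinality and on $k=\tfrac N2$ recovering MWM are consistent with the surrounding text and add no new content beyond what the paper already establishes.
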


\subsection*{Fundamental Subroutine: Random}
An even simpler matching algorithm is simply to form a matching completely at random; this does not even depend on the input preferences. This is formally described in Algorithm~\ref{alg_random}. In what follows, we show upper and lower bounds on the performance of Algorithm~\ref{alg_greedy} for different edges sets.
\begin{algorithm}[htbp]
\SetKwInOut{Input}{input}\SetKwInOut{Output}{output}
\Input{Edge set $E$, $k$}
\Output{Matching $M_R$ with $k$ edges}
\While{$E$ is not empty (AND) $|M_R| < k$}{
pick an edge from $E$ uniformly at random. Add this edge $e=(x,y)$ to $M_R$\;
remove all edges containing $x$ or $y$ from $E$\;
}
\caption{Random $k$-Matching Algorithm}
\label{alg_random}
\end{algorithm}

\begin{lemma}
\label{lem_randomlower}
(Lower Bound)
\begin{enumerate}

\item Suppose $G=(T,E)$ is a complete graph on the set of nodes $T \subseteq \mathcal{N}$ with $|T|=n$. Then, the expected weight of the random (perfect) matching returned by Algorithm~\ref{alg_random} for the input $E$ is
$E[w(M_R)] \geq \frac{1}{n}\sum_{(x,y) \in E} w(x,y).$

\item Suppose $G=(T_1,T_2,E)$ is a complete bipartite graph on the set of nodes $T_1, T_2 \subseteq \mathcal{N}$ with $|T_1|=|T_2|=n$. Then, the weight of the random (perfect) matching returned by Algorithm~\ref{alg_random} for the input $E$ is
$E[w(M_R)] = \frac{1}{n}\sum_{(x,y) \in E} w(x,y).$
\end{enumerate}
\end{lemma}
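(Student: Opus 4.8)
The plan is to compute the expected weight of the random matching $M_R$ by linearity of expectation, expressing $E[w(M_R)]$ as $\sum_{(x,y)\in E} w(x,y)\cdot \Pr[(x,y)\in M_R]$, and then arguing that each edge of a complete (or complete bipartite) graph is equally likely to appear, so that the probability is the same constant for every edge. Once this symmetry is established, the constant is forced by a counting identity: the total number of edges in $M_R$ is exactly $n$ (it is a perfect matching on $2n$ nodes), so $\sum_{(x,y)\in E}\Pr[(x,y)\in M_R] = E[|M_R|] = n$, and dividing by the number of edges $|E|$ gives the per-edge probability. For the complete graph $K_{2n}$ we have $|E| = \binom{2n}{2} = n(2n-1)$, so the symmetry argument would give probability exactly $\frac{1}{2n-1}$ per edge, hence $E[w(M_R)] = \frac{1}{2n-1}\sum_{(x,y)\in E}w(x,y) \ge \frac{1}{2n}\sum w(x,y)$, matching the stated bound (with room to spare). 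For the complete bipartite graph $K_{n,n}$ we have $|E| = n^2$, and the symmetry argument gives probability exactly $\frac{1}{n}$ per edge, yielding the claimed equality $E[w(M_R)] = \frac{1}{n}\sum_{(x,y)\in E}w(x,y)$.

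The one real point to justify is the claim that Algorithm~\ref{alg_random}, run to produce a perfect matching, selects each edge of the host graph with equal probability. For $K_{2n}$ this follows from the symmetry of the process: the algorithm's distribution over perfect matchings is invariant under any permutation of the $2n$ vertices (the first edge is uniform among all pairs, and conditioned on removing that pair the remaining problem is the same process on $K_{2n-2}$), and the automorphism group of $K_{2n}$ acts transitively on edges, so all edges have the same marginal probability of lying in $M_R$. An equivalent and perhaps cleaner way to see it: the uniform distribution over perfect matchings of $K_{2n}$ is itself vertex-transitive, and one can check that Algorithm~\ref{alg_random} in fact produces the uniform distribution over perfect matchings (this is a standard fact — the sequential ``pick a random edge, delete its endpoints'' process on a complete graph yields a uniformly random perfect matching), though for the lemma we only need edge-symmetry, not full uniformity. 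For $K_{n,n}$, the same reasoning applies: the process is invariant under permuting the two sides independently, and that group acts transitively on the $n^2$ edges, so again every edge has marginal probability $E[|M_R|]/|E| = n/n^2 = 1/n$.

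The main obstacle — and it is a mild one — is being careful that the symmetry argument is valid despite the \emph{sequential} nature of the algorithm; one must observe that even though later choices are conditioned on earlier ones, the overall unconditioned distribution on matchings inherits the full symmetry group of the host graph, since the algorithm treats all edges and vertices identically. I would spell this out by induction on $n$: the claim is trivial for $n=1$, and for the inductive step, $\Pr[(x,y)\in M_R]$ can be decomposed according to whether $(x,y)$ is the first edge chosen (probability $1/|E|$) or not, in which case we condition on the first chosen edge $e'$ being disjoint from $(x,y)$ and invoke the inductive hypothesis on the smaller complete (resp. complete bipartite) graph; the resulting recursion, combined with $E[|M_R|]=n$, pins down the common value. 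Note finally that in the bipartite case no appeal to the triangle inequality or to any property of the weights is needed — the statement is a pure identity about the random process — and likewise the complete-graph bound is purely combinatorial.
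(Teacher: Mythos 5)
Your proof is correct and follows essentially the same route as the paper: write $E[w(M_R)]=\sum_e \Pr[e\in M_R]\,w(e)$, argue by symmetry that the marginal probability is the same for every edge, and pin it down as $E[|M_R|]/|E|$ (the paper gets $\frac{1}{n-1}\geq\frac{1}{n}$ for the complete graph on $n$ nodes, which is your $\frac{1}{2n-1}\geq\frac{1}{2n}$ after your harmless re-parameterization to $2n$ vertices, and exactly $\frac{1}{n}$ in the bipartite case). Your extra care in justifying the edge-symmetry of the sequential process via transitivity of the automorphism group is a welcome elaboration of a step the paper simply asserts.
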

\begin{proof}
We show both parts of the theorem using simple symmetry arguments. For the complete (non-bipartite) graph, let $\mathcal{M}$ be the set of all perfect matchings in $E$. Then, we argue that every matching $M$ in $\mathcal{M}$ is equally likely to occur. Therefore, the expected weight of $M_R$ is

\begin{equation}
\label{eqn_expectation}
E[w(M_R)] = \frac{1}{|\mathcal{M}|}\sum_{M \in \mathcal{M}} w(M) = \sum_{e=(x,y) \in E}p_e w(x,y),
\end{equation}

where $p_e$ is the probability of edge $e$ occurring in the matching. Since the edges are chosen uniformly at random,  the probability that a given edge is present in $M_R$ is the same for all edges in $E$. So $\forall e$, we have the following bound of $p_e$, which we can substitute in Equation~\ref{eqn_expectation} to get the first result. 
$$p_e = \frac{|M_R|}{|E|} = \frac{n/2}{n(n-1)/2} = \frac{1}{n-1}\geq \frac{1}{n},$$ 

For the second case, where $E$ is the set of edges in a complete bipartite graph, it is not hard to see that once again every edge $e$ is present in the final matching with equal probability. Therefore, $p_e = \frac{|M_R|}{|E|} = \frac{n}{n^2} = \frac{1}{n}.$
\end{proof}

\begin{lemma}
\label{lem_matchingupper}
(Upper Bound) Let $G=(T,E)$ be a complete subgraph on the set of nodes $T \subseteq \mathcal{S}$ with $|T|=n$, and let $M$ be any perfect matching on the larger set $\mathcal{S}$. Then, the following is an upper bound on the weight of $M$,
$$w(M) \leq \frac{2}{n} \sum_{\substack{x \in T \\ y \in T}} w(x,y) + \frac{1}{n}\sum_{\substack{x \in T \\ y \in \mathcal{S} \setminus T}}w(x,y) $$
\end{lemma}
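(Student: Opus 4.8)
The plan is to bound $w(M)$ edge by edge using the triangle inequality, averaging over all $n$ nodes of $T$, and then to exploit the fact that $M$ is a \emph{perfect} matching on $\mathcal{S}$: its edges partition $\mathcal{S}$, so their endpoints enumerate $\mathcal{S}$ exactly once, which lets me re-index the resulting double sum into precisely the two sums on the right-hand side.

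Concretely, I would first fix an arbitrary edge $(u,v)\in M$. For every $x\in T$, the triangle inequality gives $w(u,v)\le w(u,x)+w(v,x)$; this also holds trivially when $x\in\{u,v\}$ (one of the terms is $0$ and the other is $w(u,v)$), so I may let $x$ range over all of $T$. Summing over the $n$ nodes $x\in T$ yields $n\cdot w(u,v)\le \sum_{x\in T}\bigl(w(u,x)+w(v,x)\bigr)$.

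Next I would sum this inequality over all edges $(u,v)\in M$. The key structural step is that, since $M$ is a perfect matching on $\mathcal{S}$, listing the two endpoints of every edge of $M$ lists each element of $\mathcal{S}$ exactly once; hence $\sum_{(u,v)\in M}\sum_{x\in T}\bigl(w(u,x)+w(v,x)\bigr)=\sum_{z\in\mathcal{S}}\sum_{x\in T}w(z,x)$, and therefore $n\cdot w(M)\le \sum_{z\in\mathcal{S}}\sum_{x\in T}w(z,x)$. Splitting the outer sum according to whether $z\in T$ or $z\in\mathcal{S}\setminus T$ gives exactly $\sum_{x\in T,\,y\in T}w(x,y)$ (each pair within $T$ being counted twice here) plus $\sum_{x\in T,\,y\in\mathcal{S}\setminus T}w(x,y)$; dividing by $n$ yields the stated inequality, the factor $\tfrac{2}{n}$ on the first sum coming from that double counting of intra-$T$ pairs.

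This is essentially a one-line averaging argument, so I do not expect a serious obstacle. The only point that needs care is the bookkeeping in the re-indexing step: one must use that $M$ is \emph{perfect} on $\mathcal{S}$ so that every node of $\mathcal{S}$ contributes exactly once (no omissions, no unintended double counting), and one must be consistent about ordered versus unordered pairs so that the constants $\tfrac{2}{n}$ and $\tfrac{1}{n}$ come out as claimed.
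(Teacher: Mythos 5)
Your proposal is correct and follows essentially the same argument as the paper: fix an edge of $M$, apply the triangle inequality against every node of $T$, sum over the $n$ nodes and then over the edges of $M$, and use the fact that the endpoints of a perfect matching on $\mathcal{S}$ enumerate $\mathcal{S}$ exactly once to re-index and split the double sum into the intra-$T$ part (counted twice, giving the $\tfrac{2}{n}$) and the cross part. No gaps.
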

\begin{proof}
Fix an edge $e=(x,y) \in M$. Then, by the triangle inequality, the following must hold for every node $z \in T$: $w(x,z)+w(y,z) \geq w(x,y)$. Summing this up over all $z \in T$, we get
$$\sum_{z \in T}w(x,z) + w(y,z) \geq n w(x,y) = n(w_e).$$

Once again, repeating the above process over all $e \in M$, and then all $z \in T$ we have
$$nw(M) \leq 2 \sum_{\substack{x \in T \\ y \in T}} w(x,y) + \sum_{\substack{x \in T \\ y \in \mathcal{S} \setminus T}}w(x,y) $$

Each $(x,y) \in E$ appears twice in the RHS: once when we consider the edge in $M$ containing $x$, and once when we consider the edge with $y$.
\end{proof}

We conclude by proving that picking edges uniformly at random yields a $2$-approximation for the MWM problem.

\begin{claim}
Algorithm~\ref{alg_random} is an ordinal $2$-approximation algorithm for the Maximum Weighted Matching problem.
\end{claim}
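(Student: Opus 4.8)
The plan is to combine the generic upper bound on any perfect matching from Lemma~\ref{lem_matchingupper} with the generic lower bound on a random matching from Lemma~\ref{lem_randomlower}, both instantiated on the full vertex set $\mathcal{N}$. Note first that since Algorithm~\ref{alg_random} is randomized and ignores the preferences $P$ entirely, the relevant quantity $A(P)$ is the \emph{expected} weight $E[w(M_R)]$ of the perfect matching it outputs (this is the $k=N/2$ case of Algorithm~\ref{alg_random}), and the claim to prove is that $OPT(w)\le 2\,E[w(M_R)]$ for every metric weighting $w$.

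First I would fix a maximum-weight perfect matching $M^*$ for the hidden weights, so that $OPT(w)=w(M^*)$. Applying Lemma~\ref{lem_matchingupper} with $T=\mathcal{S}=\mathcal{N}$ — so that $n=N$ and the cross term $\sum_{x\in T,\,y\in\mathcal{S}\setminus T}$ is empty — yields
$$w(M^*)\ \le\ \frac{2}{N}\sum_{(x,y)\in E} w(x,y),$$
where $E$ is the complete graph on $\mathcal{N}$. Next, Lemma~\ref{lem_randomlower}(1), applied to this same complete graph on $n=N$ nodes, gives that the random perfect matching produced by Algorithm~\ref{alg_random} satisfies $E[w(M_R)]\ \ge\ \frac{1}{N}\sum_{(x,y)\in E} w(x,y)$. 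Chaining the two bounds,
$$E[w(M_R)]\ \ge\ \frac{1}{N}\sum_{(x,y)\in E} w(x,y)\ =\ \frac{1}{2}\cdot\frac{2}{N}\sum_{(x,y)\in E} w(x,y)\ \ge\ \frac{1}{2}\,w(M^*),$$
so $OPT(w)/A(P)\le 2$, as desired.

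There is no real obstacle here, only two bookkeeping points to keep straight. First, one must check that the summation conventions of the two lemmas match: both $\sum_{(x,y)\in E}$ in Lemma~\ref{lem_randomlower} and $\sum_{x\in T,\,y\in T}$ in Lemma~\ref{lem_matchingupper} count each edge of $E$ exactly once, so the factor $2$ in the former and the factor $1$ in the latter combine to give precisely the advertised factor $2$. Second, one should be explicit that for a randomized ordinal algorithm ``the value returned'' means its expectation, which is exactly what both bounds above are stated against. The argument as written assumes $N$ is even so that a perfect matching exists; the routine modification for odd $N$ (leaving out a single vertex) is deferred to the Appendix and affects the bound only in lower-order terms.
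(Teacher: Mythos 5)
Your argument is correct and is essentially identical to the paper's proof: both apply Lemma~\ref{lem_randomlower}(1) to the complete graph on $\mathcal{N}$ for the lower bound on $E[w(M_R)]$ and Lemma~\ref{lem_matchingupper} with $T=\mathcal{S}=\mathcal{N}$ (so the cross term vanishes) for the upper bound on $OPT$, then chain the two. The extra bookkeeping remarks about edge-counting conventions and the expectation semantics are sensible but do not change the substance.
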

\begin{proof}
From Lemma~\ref{lem_randomlower}, we know that in expectation, the matching output by the algorithm when the input is $\mathcal{N}$ has a weight of at least $\frac{1}{N}\sum_{x \in \mathcal{N}, y \in \mathcal{N}}w(x,y)$. Substituing $T = \mathcal{S} = \mathcal{N}$ in Lemma~\ref{lem_matchingupper} and $M=OPT$ (max-weight matching) gives us the following upper bound on the weight of $OPT$, $w(OPT) \leq \frac{2}{N} \sum_{x \in \mathcal{N}, y \in \mathcal{N}}w(x,y) \leq 2 E[w(M_R)]$.
\end{proof}

\subsection{Lower Bound Example for Ordinal Matchings}
Before presenting our algorithms, it is important to understand the limitations of settings with ordinal information. As mentioned in the Introduction, different sets of weights can give rise to the same preference ordering, and therefore, we cannot suitably approximate the optimum solution for every possible weight. We now show that even for very simple instances, there can be no deterministic $1.5$-approximation algorithm, and no randomized $1.25$-approximation algorithm. 

\begin{claim}
\label{clm:lowerbound}
No deterministic ordinal approximation algorithm can provide an approximation factor better than $1.5$, and no randomized ordinal approximation algorithm can provide an approximation factor better than $1.25$ for Maximum Weighted Matching. No ordinal algorithm, deterministic or randomized can provide an approximation factor better than $2$ for the Max $k$-Matching problem.
\end{claim}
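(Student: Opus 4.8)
The plan is to prove all three bounds by the same device: exhibit a small family of metric weight functions that all induce the \emph{same} preference profile $P$, so that an ordinal algorithm — which sees only $P$ — cannot distinguish them, and then let the adversary pick, in hindsight, the member of the family that is worst for whatever the algorithm produces. The one genuine subtlety is the triangle inequality, which forces every matching to be ``not too bad''; to get around it the family must be built around a nearly uniform metric in which most agents are \emph{indifferent}, with the strict $P$ obtained by breaking those ties in some fixed way. That indifference is exactly what lets several honestly different metrics share $P$.

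For both bounds on Maximum Weighted Matching I would use one instance on $\mathcal{N}=\{1,2,3,4\}$, with $P$ given by $1\!:2\succ3\succ4$, $\;2\!:1\succ3\succ4$, $\;3\!:1\succ2\succ4$, $\;4\!:1\succ2\succ3$. Let $w^A$ give edge $(1,2)$ weight $2$ and every other edge weight $1$; let $w^B$ give every edge incident to $1$ or $2$ weight $2$ and edge $(3,4)$ weight $0$. One checks directly that both are metrics and both induce $P$ (under $w^A$ all agents but $1,2$ are indifferent; under $w^B$ agents $1,2$ are indifferent and $3,4$ rank $4$, resp.\ $3$, last). The perfect matchings $M_1=\{(1,2),(3,4)\}$, $M_2=\{(1,3),(2,4)\}$, $M_3=\{(1,4),(2,3)\}$ have weights $(3,2,2)$ under $w^A$ and $(2,4,4)$ under $w^B$, so $OPT(w^A)=3$ and $OPT(w^B)=4$. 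Since outputting a non‑perfect matching is never better (any partial matching on four nodes completes to a perfect one without losing weight, under \emph{any} nonnegative weights), I may assume the algorithm outputs a distribution $(p_1,p_2,p_3)$ over $M_1,M_2,M_3$. A deterministic algorithm commits to one $M_i$: $M_1$ yields ratio $OPT(w^B)/w^B(M_1)=2$, and $M_2$ or $M_3$ yields ratio $OPT(w^A)/w^A(M_i)=3/2$, so no deterministic ordinal algorithm beats $3/2$. For a randomized algorithm the expected weight is $2+p_1$ under $w^A$ and $4-2p_1$ under $w^B$, so the adversary gets ratio at least $\max\{\,3/(2+p_1),\,4/(4-2p_1)\,\}$, a quantity minimized at $p_1=2/5$ with value $5/4$; hence no randomized ordinal algorithm beats $5/4$.

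For the Max $k$-Matching bound I would take $k=1$ and let $\mathcal{N}$ be $m$ disjoint pairs $\{a_i,b_i\}$, with $P$ in which each $a_i$ (resp.\ $b_i$) ranks $b_i$ (resp.\ $a_i$) first and everyone else in a fixed order. For $\ell\in[m]$ let $w^\ell$ give $(a_\ell,b_\ell)$ weight $2$ and all other edges weight $1$: this is a metric, it induces $P$ (every agent outside pair $\ell$ is indifferent), and $OPT(w^\ell)=2$. A $k=1$ ordinal algorithm outputs a distribution over single edges; with $q_i$ the probability it picks $(a_i,b_i)$ we have $\sum_i q_i\le 1$, so some $q_\ell\le 1/m$, and under $w^\ell$ the expected weight picked is at most $2q_\ell+(1-q_\ell)=1+q_\ell\le 1+1/m$, giving ratio at least $\tfrac{2m}{m+1}$. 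Letting $m\to\infty$ rules out every factor below $2$ for randomized algorithms, and $m=2$ already forces a ratio of exactly $2$ for deterministic ones.

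The step I expect to be most delicate is verifying, for each instance, that \emph{all} the chosen weight functions simultaneously satisfy the triangle inequality \emph{and} induce the single fixed strict profile $P$ — getting both at once is what dictates the apparently strange weight choices (the weight‑$0$ edge in $w^B$ is the only way to pull $w^B(M_1)$ down to $2$ while keeping $1,2$ indifferent so that $P$ survives). Everything after that is a short case check plus the one‑variable optimization above.
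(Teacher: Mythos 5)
Your proposal is correct and follows essentially the same route as the paper: a four-node instance with two metrics consistent with one preference profile (one making $\{(1,2),(3,4)\}$ optimal, one making the cross matchings optimal), the same one-variable optimization giving $p_1=2/5$ and the $5/4$ bound, and the same disjoint-pairs construction with one randomly boosted edge for Max $k$-Matching. The only cosmetic difference is that the paper uses an $\epsilon$-weight edge where you use a weight-$0$ edge and a rescaled second metric; the ratios and the argument are identical.
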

\begin{proof}
Consider an instance with $4$ nodes having the following preferences: $(i)$ $a: b>c>d$, $(ii)$ $b: a>d>c$, $(iii)$ $c: a>b>d$, $(iv)$ $d: b>a>c$. Since the matching $\left\lbrace(a,d), (b,c)\right\rbrace$ is weakly dominated, it suffices to consider algorithms that randomize between $M_1 = \left\lbrace(a,b), (c,d)\right\rbrace$, and $M_2 =\left\lbrace(a,c), (b,d)\right\rbrace$, or deterministically chooses one of them.

Now, consider the following two sets of weights, both of which induce the above preferences but whose optima are $M_2$ and $M_1$ respectively: $W_1:= w(a,b)=w(a,c)=w(b,d)$ $=w(a,d)=w(b,c)=1$, $w(c,d)=\epsilon$, and $W_2:= w(a,b)=2, w(a,c)=w(b,d)=w(c,d)=w(a,d)=w(b,c)=1$. The best deterministic algorithm always chooses the matching $M_2$, but for the weights $W_2$, this is only a $\frac{3}{2}$-approximation to OPT.

Consider any randomized algorithm that chooses $M_1$ with probability $x$, and $M_2$ with probability $(1-x)$. With a little algebra, we can verify that just for $W_1$, and $W_2$, the optimum randomized algorithm has $x=\frac{2}{5}$, yielding an approximation factor of $1.25$.

For the Max $k$-Matching problem, our results are tight. For small values of $k$, it is impossible for any ordinal algorithm to provide a better than $2$-approximation factor. To see why, consider an instance with $2N$ nodes $\{a_1,b_1,a_2,b_2,\ldots, a_N,b_N\}$. Every $a_i$'s first choice is $b_i$ and vice-versa, the other preferences can be arbitrary. Pick some $i$ uniformly at random and set $w(a_i,b_i)=2$, and all the other weights are equal to $1$. For $k=1$, it is easy to see that no randomized algorithm can always pick the max-weight edge and therefore, as $N \to \infty$, we get a lower bound of $2$.
\end{proof}

Since Max $k$-sum is a strict generalization of Maximum Weighted Matching, the same lower bounds for Maximum Matching hold for Max $k$-sum as well.


\section{Ordinal Matching Algorithms}
Here we present a better ordinal approximation than simply taking the random or greedy matching. The algorithm first performs the greedy subroutine until it matches $\frac{2}{3}$ of the agents. Then it either creates a random matching on the unmatched agents, {\em or} it creates a random matching between the unmatched agents and a subset of agents which are already matched. We show that one of these matchings is guaranteed to be close to optimum in weight. Unfortunately since we have no access to the weights themselves, we cannot simply choose the best of these two matchings, and thus are forced to randomly select one, giving us good performance in expectation. More formally, the algorithm is:

\begin{algorithm}[htbp]
\SetKwInOut{Input}{input}\SetKwInOut{Output}{output}
\Input{$\mathcal{N}, P(\mathcal{N})$}
\Output{Perfect Matching $M$}
Initialize $E$ to the be complete graph on $\mathcal{N}$, and $M_1=M_2=\emptyset$\;
Let $M_0$ be the output returned by Algorithm~\ref{alg_greedy} for $E$,$k=\frac{2}{3}\frac{N}{2}$\;
Let $B$ be the set of nodes in $\mathcal{N}$ not matched in $M_0$, and $E_B$ is the complete graph on $B$.\;
\textbf{First Algorithm}\;
$M_1 = M_0 \cup (\text{The perfect matching output by Algorithm~\ref{alg_random} on $E_B$)}$\;
\textbf{Second Algorithm} \;
Choose half the edges from $M_0$ uniformly at random and add them to $M_2$\;
Let $A$ be the set of nodes in $M_0 \setminus M_2$\;
Let $E_{ab}$ be the edges of the complete bipartite graph $(A,B)$\;
Run Algorithm~\ref{alg_random} on the set of edges in $E_{ab}$ to obtain a perfect bipartite matching and add the edges returned by the algorithm to $M_2$\;
\textbf{Final Output}
Return $M_1$ with probability $0.5$ and $M_2$ with probability $0.5$.
\caption{$1.6$-Approximation Algorithm for Maximum Weight Matching}
\label{alg_bestmatching}
\end{algorithm}

\begin{theorem}
\label{thm_mainmatching}
For every input ranking, Algorithm~\ref{alg_bestmatching} returns a $\frac{8}{5}=1.6$-approximation to the maximum-weight matching.
\end{theorem}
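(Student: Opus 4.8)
The plan is to lower-bound $E[w(M)] = \tfrac12 E[w(M_1)] + \tfrac12 E[w(M_2)]$ and to upper-bound $w(OPT)$ in two complementary ways, chosen so the estimates meet exactly at the ratio $\tfrac85$. I will write $w_g = w(M_0)$ for the greedy phase, let $B$ be the set of $\tfrac N3$ nodes left unmatched (so $M_0$ has $\tfrac N3$ edges and covers $V_0 := \mathcal N\setminus B$, $|V_0| = \tfrac{2N}{3}$), and set $W_{BB} = \sum_{x,y\in B} w(x,y)$ and $W_{0B} = \sum_{x\in V_0,\, y\in B} w(x,y)$. I will assume $N$ is divisible by $6$ so every relevant set has even size; the appendix handles the general case with cosmetic changes.

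First I would invoke Lemma~\ref{lem_matchinggreedy} with $\alpha = \tfrac23$ (the greedy $k$-matching with $k = \tfrac23\cdot\tfrac N2$) and $\alpha^* = 1$ (since $OPT$ is a perfect matching). As $\alpha + \alpha^* = \tfrac53 \ge 1$, the relevant bound is $\max(2,\tfrac{\alpha^*+1}{\alpha}-1) = \max(2,2) = 2$, giving $w_g \ge \tfrac12 w(OPT)$. It is worth noting that $\alpha = \tfrac23$ is forced rather than tuned: the $A$-nodes in the second algorithm are the endpoints of half of $M_0$'s $\tfrac N3$ edges, hence $|A| = \tfrac N3 = |B|$, which is exactly the condition making the random bipartite matching on $(A,B)$ perfect and $M_2$ a perfect matching on $\mathcal N$.

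Next I would estimate the two matchings. For $M_1 = M_0 \cup M_R(B)$, Lemma~\ref{lem_randomlower}(1) on the complete graph on $B$ gives $E[w(M_1)] \ge w_g + \tfrac1{|B|}W_{BB}$. For $M_2$, the random half of $M_0$ contributes expected weight $\tfrac12 w_g$; then I would condition on the random set $A$, note that the $(A,B)$-matching is drawn by an independent coin, apply Lemma~\ref{lem_randomlower}(2) to get $E[w(M_R(A,B))\mid A] = \tfrac1{|B|}\sum_{x\in A,y\in B}w(x,y)$, and take expectation over $A$ using that each node of $V_0$ lies in $A$ with probability $\tfrac12$; linearity then yields $E[w(M_2)] = \tfrac12 w_g + \tfrac1{2|B|}W_{0B}$. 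Averaging, $E[w(M)] \ge \tfrac34 w_g + \tfrac1{2|B|}W_{BB} + \tfrac1{4|B|}W_{0B}$. On the other side, Lemma~\ref{lem_matchingupper} with $T = B$, $\mathcal S = \mathcal N$, $M = OPT$ gives $w(OPT) \le \tfrac2{|B|}W_{BB} + \tfrac1{|B|}W_{0B}$, so $\tfrac1{2|B|}W_{BB} + \tfrac1{4|B|}W_{0B} \ge \tfrac14 w(OPT)$. Combining, $E[w(M)] \ge \tfrac34 w_g + \tfrac14 w(OPT) \ge \tfrac38 w(OPT) + \tfrac14 w(OPT) = \tfrac58 w(OPT)$, i.e.\ a $\tfrac85$-approximation.

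The argument is short once Lemmas~\ref{lem_matchinggreedy},~\ref{lem_randomlower}, and~\ref{lem_matchingupper} are in hand; the step needing the most care is the analysis of $M_2$ — cleanly separating its two layers of randomness (which half of $M_0$ is retained, then which bipartite matching on $(A,B)$ is picked) so that Lemma~\ref{lem_randomlower}(2), which is stated for a fixed bipartite graph, can be applied conditionally on $A$ and then de-conditioned via linearity of expectation. The conceptual crux — and the reason $M_1$ and $M_2$ must be mixed rather than one used alone — is that neither is individually a $\tfrac85$-approximation (if essentially all the weight sits on greedy edges then $M_1 \approx w_g$ can be as low as $\tfrac12 w(OPT)$, and $M_2$ is worse still); only the averaged bound $\tfrac34 w_g + \tfrac14 w(OPT)$ simultaneously exploits the greedy guarantee of Lemma~\ref{lem_matchinggreedy} and the dispersion bound of Lemma~\ref{lem_matchingupper} on the $B$-incident edges.
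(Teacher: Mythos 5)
Your proof is correct and follows essentially the same route as the paper's: the greedy guarantee $w(M_0)\ge\tfrac12 w(OPT)$ from Lemma~\ref{lem_matchinggreedy}, the two lower bounds on $E[w(M_1)]$ and $E[w(M_2)]$ via Lemma~\ref{lem_randomlower} (with the same conditioning on $A$ and $\Pr(x\in A)=\tfrac12$), and Lemma~\ref{lem_matchingupper} applied to $T=B$ to make the cross and within-$B$ terms jointly dominate $\tfrac14 w(OPT)$. The only difference is cosmetic --- you average the two bounds before substituting the upper bound on $w(OPT)$, whereas the paper substitutes into $E[w(M_2)]$ first --- and the arithmetic lands on the same $\tfrac58 w(OPT)$.
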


\begin{proof}
First, we provide some high-level intuition on why this algorithm results in a significant improvement over the standard half-optimal greedy and randomized approaches. Observe that in order to obtain a half-approximation to $OPT$, it is sufficient to greedily select $\frac{2}{3}(N/2)$ edges (substitute $\alpha=\frac{2}{3}$, $\alpha^*=1$ in Lemma~\ref{lem_matchinggreedy}). Choosing all $\frac{N}{2}$ edges greedily would be overkill, and so we choose the remaining edges randomly in the First Algorithm of Alg~\ref{alg_bestmatching}. Now, let us denote by $Top$, the set of $\frac{2}{3}N$ nodes that are matched greedily. The main idea behind the second Algorithm is that if the first one performs poorly (not that much better than half), then, all the `good edges' must be going across the cut from $Top$ to Bottom ($B$). In other words, $\sum_{(x,y) \in Top \times B} w(x,y)$ must be large, and therefore, the randomized algorithm for bipartite graphs should perform well. In summary, since we randomized between the first and second algorithms, we are guaranteed that at least one of them should have a good performance for any given instance.

We now prove the theorem formally. By linearity of expectation, $E[w(M)] = 0.5(E[w(M_1)] + E[w(M_2)]).$ Now, look at the first algorithm, since $M_0$ has two-thirds as many edges as the optimum matching, we get from Lemma~\ref{lem_matchinggreedy} that $w(M_0) \geq \frac{1}{2}w(OPT)$. As mentioned in the algorithm, $B$ is the set of nodes that are not present in $M_0$; since we randomly match the nodes in $B$ to other nodes in $B$, the expected weight of the random algorithm (from Lemma~\ref{lem_randomlower} with $n=\frac{N}{3}$) is $\frac{3}{N}\sum_{(x,y) \in B}w(x,y).$ Therefore, we get the following lower bound on the weight of $M_1$,
$$E[w(M_1)] \geq \frac{OPT}{2} + \frac{3}{N}\sum_{(x,y) \in B}w(x,y).$$

Next, look at the second algorithm: half the edges from $M_0$ are added to $M_2$. $A$ constitutes the set of $\frac{N}{3}$ nodes from $M_0$ that are not present in $M_2$, these nodes are randomly matched to those in $B$. Let $M_{AB}$ denote the matching going `across the cut' from $A$ to $B$. Since the set $A$ is chosen randomly from the nodes in $Top$, the expected weight of the matching from $A$ to $B$ is given by,
\begin{align*}
E[w(M_{AB})]= &\sum_{\substack{S \subset Top \\ |S| = \frac{N}{3}}} E\left[w(M_{AB}) \mid (A=S)\right]Pr(A=S)\\
= & \sum_{\substack{S \subset Top \\ |S| = \frac{N}{3}}} Pr(A=S)\sum_{(x,y) \in S \times B} \frac{3}{N}w(x,y)\\
= & \sum_{(x,y) \in Top \times B} \frac{3}{N}w(x,y) Pr(x \in A)\\
= & \sum_{(x,y) \in Top \times B} \frac{3}{N}w(x,y) \times \frac{1}{2} & .
\end{align*}

The second equation above comes from Lemma~\ref{lem_randomlower} Part 2 for $n=\frac{N}{3}$, and the last step follows from the observation that $Pr(x \in A)$ is exactly equal to the probability that the edge containing $x$ in $M_0$ is not added to $M_2$, which is one half (since the edge is chosen with probability $0.5$). We can now bound the performance of $M_2$ as follows,
\begin{align*}
E[w(M_2)] & = \frac{1}{2}E[w(M_0)] + E[w(M_{AB})]\\
& = \frac{1}{2}w(M_0) + \frac{3}{2N}\sum_{(x,y) \in Top \times B}w(x,y).
\end{align*}

Now, let us apply Lemma~\ref{lem_matchingupper} to the set $T=B$ ($n=\frac{N}{3}$), with $OPT$ being the matching: we get that $w(OPT) \leq \frac{6}{N}\sum_{x \in B, y \in B}w(x,y) + \frac{3}{N}\sum_{x \in Top, y \in B}w(x,y)$ or equivalently $\frac{3}{2N}\sum_{(x,y) \in Top \times B}w(x,y) \geq \frac{w(OPT)}{2} - \frac{3}{N}\sum_{x \in B, y \in B}w(x,y).$ Substituting this in the above equation for $M_2$ along with the fact that $w(M_0) \geq \frac{w(OPT)}{2}$, we get the following lower bound for the performance of $M_2$ in terms of OPT,
$$E[w(M_2)] \geq \frac{w(OPT)}{4} + \frac{w(OPT)}{2} - \frac{3}{N}\sum_{x \in B, y \in B}w(x,y).$$
Recall that
$$E[w(M_1)] \geq \frac{w(OPT)}{2} + \frac{3}{N}\sum_{x \in B, y \in B}w(x,y).$$
The final bound comes from adding the two quantities above and multiplying by half.
\end{proof}

\subsection{Matching without the Metric Assumption}
We now discuss the general case where the hidden weights do not obey the triangle inequality. From our discussion in Section~\ref{sec:framework}, we infer that Algorithm~\ref{alg_greedy} still yields a $2$-approximation to the MWM problem as its output coincides with that of the classic greedy algorithm. No deterministic algorithm can provide a better approximation; consider the same preference orderings as Claim~\ref{clm:lowerbound} and the following two sets of consistent weights: $(i)$ $w(c,d)=\epsilon$, other weights are $1$, and $(ii)$ $w(a,b)=1$, other weights are $\epsilon$. The only good choice for case $(ii)$ is the matching $M_1$, which yields a $2$-approximation for case $(i)$.

We now go one step further and show that even if we are allowed to utilize randomized mechanisms, we still can't do that much better than an ordinal $2$-approximation factor. 

\begin{claim}
When the hidden weights do not obey the triangle inequality, there exist a set of preferences such that no randomized algorithm can provide an ordinal approximation factor better than $\frac{5}{3}$ for every set of weights consistent with these preferences.
\end{claim}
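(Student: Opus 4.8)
The plan is to exhibit a fixed preference profile on a constant number of nodes (or a family of such profiles, one per scale $N$) together with a small collection of weight vectors consistent with it, such that no single distribution over matchings can do better than $\frac{5}{3}$ against all of them simultaneously. This is exactly the strategy used in Claim~\ref{clm:lowerbound}, so the first step is to set up the right instance. I would start from a profile where many disjoint pairs $(a_i,b_i)$ are mutual top choices, as in the Max $k$-Matching lower bound, since mutual-top pairs force the algorithm's hand only partially: any reasonable algorithm should pick all such greedy edges, but the remaining structure is where the adversary gets leverage. The key is that, without the triangle inequality, an edge that looks ``second-best'' ordinally can actually carry almost all the weight, so the adversary can plant a heavy matching that is ordinally invisible.

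Next I would write down the candidate weight vectors. The natural choice is a ``base'' vector $W_0$ in which the greedy/top edges have weight $2$ and everything else has weight $1$ (or $\epsilon$), together with one ``planted'' vector $W_j$ for each non-greedy perfect matching $M_j$ that the algorithm might plausibly output, in which the edges of $M_j$ are boosted to be much heavier than everything else. Because metricity is dropped, all of these are consistent with the same preferences as long as we never boost an edge above a node's stated top choice in a way that would reorder its preference list — this is the consistency check that must be done carefully. For $W_0$, $OPT$ is the all-greedy matching; for each $W_j$, $OPT$ is (essentially) $M_j$. Then for a randomized algorithm outputting matching $M$ with some distribution, its expected value on $W_0$ is controlled by how much probability mass it puts on the greedy edges, while its expected value on each $W_j$ is controlled by the mass on $M_j$'s edges; these demands pull in opposite directions.

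The analysis step is then a small LP / averaging argument: average the algorithm's approximation ratio over the weight vectors $W_0, W_1, \dots, W_m$ (or pick the worst one), and show that no distribution can keep the ratio below $\frac{5}{3}$ on all of them. Concretely, if the algorithm commits probability $p$ to the greedy matching, it loses on some $W_j$ proportionally to $1-p$, and if it spreads mass to cover the planted matchings it loses on $W_0$ proportionally to $p$; balancing $2$ vs.\ the spread gives the $\frac{5}{3}$ threshold (the ``$2$'' is the base ratio and $\frac{5}{3}$ is what you get after the optimal randomization, analogous to how $\frac{3}{2}$ became $\frac{5}{4}$ in the metric case but with a weaker planted edge because here the heavy edge can be arbitrarily dominant). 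I would take $N \to \infty$ so that the number of disjoint gadgets grows and the ``pick a random $i$'' trick drives the bound to exactly $\frac{5}{3}$ rather than something slightly better.

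The main obstacle I anticipate is the consistency bookkeeping: making sure every planted weight vector $W_j$ genuinely induces the same preference orderings as the base instance, despite boosting a whole matching's worth of edges. Without the triangle inequality there is a lot of freedom, but one still has to verify, node by node, that no boosted edge overtakes a node's declared top choice — so the gadget has to be designed so that the planted matchings only ever use edges that are ``second choice or later'' for both endpoints, and the boosts stay below the top-choice weight. Getting a gadget that simultaneously (i) has enough distinct planted matchings to force the $\frac{5}{3}$ bound, (ii) keeps all of them ordinally consistent, and (iii) scales cleanly with $N$, is the delicate part; once the gadget is fixed, the probabilistic averaging argument is routine.
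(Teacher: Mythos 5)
Your high-level strategy --- fix one preference profile, exhibit a small family of weight vectors consistent with it, and run an averaging/LP argument over the algorithm's output distribution --- is exactly the shape of the paper's proof. But the concrete gadget you propose provably cannot work, and without a working gadget the constant $\frac{5}{3}$ is never established. You start ``from a profile where many disjoint pairs $(a_i,b_i)$ are mutual top choices'' and plan to plant heavy weight on non-greedy matchings. Ordinal consistency forces $w(u,v)\le w\bigl(u,\mathrm{top}(u)\bigr)$ for every edge $(u,v)$, so if \emph{every} node's top choice is its designated partner, then for any perfect matching $M^*$ and any consistent weights
$$w(M^*)\;\le\;\sum_{(u,v)\in M^*}\min\bigl(w(g(u)),\,w(g(v))\bigr)\;\le\;\tfrac{1}{2}\sum_{u}w(g(u))\;=\;w(M_G),$$
where $g(u)$ denotes the mutual-top edge containing $u$ and $M_G=\{(a_i,b_i)\}_i$. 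That is, on your gadget the deterministic all-top-pairs matching is \emph{optimal} for every consistent weight vector, so no lower bound above $1$ can be extracted from it: the ``planted'' matchings can never overtake $M_G$ precisely because of the consistency constraint you yourself flag. (The mutual-top-pair instance is useful in the paper only for Max $k$-Matching with small $k$, where the algorithm cannot afford to take all the top edges; for perfect matching it is toothless.)

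What is actually needed, and what the paper constructs, is a profile with \emph{contention}: an $8$-node instance $(a_i,b_i)_{i=1}^4$ in which nodes carry ranks and everyone's preference list is sorted by rank, so that six nodes all compete for $a_1,b_1$ and the algorithm cannot tell which of several ordinally indistinguishable ways of using those nodes carries the weight. The paper then reduces, by symmetry, to distributions over six matchings $M_1,\dots,M_6$ with probabilities $x_1,\dots,x_6$, writes down four consistent weight vectors yielding four linear inequalities in the $x_i$ and $c$, and adds them to get $\sum_i x_i\ge\frac{10}{6}c$, hence $c\le\frac{3}{5}$. The constant $\frac{5}{3}$ is the output of that specific little linear program, not of a generic ``balance $p$ against $1-p$'' calculation, and your sketch contains no derivation of it (nor does the $N\to\infty$ limit you invoke play any role --- the bound is achieved on a fixed $8$-node instance). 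So the proposal has the right proof architecture but a fatal gap at its core: the instance family it rests on yields no lower bound at all, and the quantitative heart of the claim is missing.
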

\begin{proof}
The instance consists of $8$ nodes $(a_i,b_i)_{i=1}^4$, and we describe the preferences in a slightly unconventional but more intuitive manner. Every node $p \in \mathcal{N}$ is assigned a rank $r(p)$ and two or more nodes may have the same rank. Now, the preference ordering for a node $q$ is simply a list of the nodes in $\mathcal{N} - \{q\}$ sorted in the ascending order of their rank and ties in the rank can be broken arbitrarily. For this instance, we set $rank(a_i) = rank(b_i) = i$ for $i=1$ to $4$. So for instance, $a_1$'s preference ordering can be: $b_1 > a_2 > b_2 > a_3 > b_3 > a_4 > b_4$.

For all of the weights that we consider, $w(a_1,b_1)=1$, $w(p,q)=0$ if $p=\{a_3,a_4\}$, $q=\{b_3,b_4\}$. Moreover, for a given $i$, the weight of all edges going out of $a_i$ and $b_i$ are the same. By symmetry, there exists an optimal randomized mechanism for this instance that randomizes among the following six matchings, choosing matching $M_i$ with probability $x_i$. 

\begin{enumerate}
\item $M_1 = \{(a_1,b_1), (a_2,b_2), (a_3,b_3), (a_4,b_4)\}$
\item $M_2 = \{(a_1,b_1), (a_2,a_3), (b_2,b_3), (a_4,b_4)\}$
\item $M_3 = \{(a_1,a_2), (b_1,b_2), (a_3,b_3), (a_4,b_4)\}$
\item $M_4 = \{(a_1,a_2), (b_1,a_3), (b_2,b_3), (a_4,b_4)\}$
\item $M_5 = \{(a_1,a_3), (b_1,b_3), (a_2,a_4), (b_2,b_4)\}$
\item $M_6 = \{(a_1,a_3), (b_1,b_3), (a_2,b_2), (a_4,b_4)\}$.
\end{enumerate}

We now construct sets of weights consistent with the preferences. Suppose that the optimal randomized strategy $A$ provides an ordinal approximation factor of $\frac{1}{c}$ for some $c \in (0,1]$. Then for every set of weights $W$, it is true that 

\begin{equation}
\label{eqn_ordinafactor}
\frac{OPT(W)}{E[A(w)]} \leq \frac{1}{c}
\end{equation}
 We now explicitly construct some weights and derive an upper bound of $\frac{3}{5}$ on $c$, which implies the no randomized strategy can have an approximation factor better than $\frac{5}{3}$ for this instance.

\begin{enumerate}
\item All weights are zero except $w(a_1,b_1)=1$. For this instance $OPT =1$, and only $M_1,M_2$ give non-zero utility, so $A(W) = x_1 + x_2$. Applying Equation~\ref{eqn_ordinafactor}, we get $x_1 + x_2 \geq c$.

\item $w(a_1,a_2) = w(a_1,b_2) = w(b_1,a_2) = w(b_1,b_2) = 1$, rest are zero. $OPT = 2$, $A(w) = x_1 + x_2 + 2x_3 + x_4$. The corrresponding inequality is $x_1 + x_2 + 2x_3 + x_4 \geq 2c$.

\item All weights going out of $a_1, b_1$ are one. Among the remaining weights, only $w(a_2,b_2) = 1$. $OPT = 3$. $A(W) = 2x_1 + x_2 + 2x_3 + 2x_4 + 2x_5 + 3x_6$, giving us the inequality, $2x_1 + x_2 + 2x_3 + 2x_4 + 2x_5 + 3x_6 \geq 3c$.

\item The final instance has all weights coming out of $a_1,a_2,b_1,b_2$ to be one. $OPT = 4$. The final inequality is $A(W) = 2x_1 + 3x_2 + 2x_3 + 3x_4 + 4x_5 + 3x_6 \geq 4c$.
\end{enumerate}
Adding all of the $4$ inequalities above gives us $\sum x_i \geq \frac{10}{6}c$. Since $\sum x_i = 1$, we get that $c \leq \frac{3}{5}$, which completes the proof.
\end{proof}

We hypothesize that as we extend the instance in the above claim, as $N \to \infty$, we should obtain a lower bound of $2$, which meets our upper bound. For Max $k$-matching, the situation is much more bleak; no algorithm, deterministic or randomized can provide a reasonable approximation factor if $k$ is small. As we did before, consider an instance with $2N$ nodes $\{a_1,b_1,a_2,b_2,\ldots, a_N,b_N\}$. Every $a_i$'s first choice is $b_i$ and vice-versa, the other preferences can be arbitrary. Pick some $i$ uniformly at random and set $w(a_i,b_i)=1$, and all the other weights are equal to $\epsilon$. For $k=1$, it is easy to see that every randomized algorithm obtains non-zero utility only with probability $\frac{1}{N}$, whereas $OPT=1$. Therefore, the ordinal approximation factor for any random algorithm is $N$ and as $N \to \infty$, the factor becomes unbounded. Moreover, for other values of $k$, the ordinal approximation factor for the same instance is $\frac{N}{k}$.

\section{Matching as a Black-Box for other Problems}
\label{sec:applications}
In this section, we highlight the versatility of matchings by showing how matching algorithms can be used as a black-box to obtain good (ordinal) approximation algorithms for other problems. These reductions serve as stand-alone results, as the algorithms for matching are easy to implement as well as extremely common in settings with preference lists. Moreover, future improvements on the ordinal approximation factor for matchings can be directly plugged in to obtain better bounds for these problems.

\subsubsection*{Informal Statement of Results}

\begin{enumerate}
\item \textbf{Max $k$-Sum}: Given an $\alpha$-approximate perfect matching $M$, we can obtain a nice clustering as follows: ``simply divide $M$ into $k$ equal sized sets (with $\frac{N}{2k}$ edges in each) and form clusters using the nodes in each of the equal-sized sets." It turns out that this simple mechanism provides a $2\alpha$-approximation to the optimum clustering. Plugging in our $1.6$-approximation algorithm, we immediately get a $3.2$-approximation algorithm for Max $k$-sum. 

\item \textbf{Densest $k$-Subgraph}: Suppose we are provided an $\alpha$-approximate matching $M$ of size $\frac{k}{2}$, how good is the set $S$ containing the $k$ nodes in $M$? Using Lemma~\ref{lem_matchingupper}, we can establish that $S$ is at least as dense as $\frac{w(M)}{2}$, and the density of the optimum solution is at most $\alpha w(M)$. Therefore, $S$ is a $2\alpha$-approximation to the optimum set of size $k$. This easy-to-implement mechanism directly yields a $4$-approximation algorithm for Densest $k$-subgraph.

\item \textbf{Max TSP:} Given an $\alpha$-approximate perfect matching $M$, any tour $T$ containing $M$ is a $2\alpha$-approximation since the weight of the optimum tour cannot be more than twice that of the optimum matching. However, if we carefully form $T$ using only undominated edges, we can show that the resulting solution is a $\frac{4}{3}$-approximation to the optimum tour. Plugging in $\alpha=1.6$, we get an ordinal $2.14$-approximation algorithm for Max TSP.
\end{enumerate}

\subsubsection*{Formal Results}
\begin{theorem}
\begin{enumerate}
\item Any $\alpha$-approximation algorithm for the Maximum Weight Perfect Matching problem can be used to obtain a $2\alpha$-approximation for the Max $k$-Sum problem.
\item Any $\alpha$-approximation algorithm for the Maximum Weight $\frac{k}{2}$-Matching problem can be used to obtain a $2\alpha$-approximation for the Densest $k$-Subgraph problem.
\item Any $\alpha$-approximation algorithm for the Maximum Weight Perfect Matching problem can be used to obtain a $\frac{4}{3 - \frac{4}{N}}\alpha$-approximation for the Max TSP problem. \end{enumerate}
\end{theorem}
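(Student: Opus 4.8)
The plan is to prove all three parts by the same two-step recipe: first a \emph{lower bound} on the value of the solution the reduction outputs, expressed in terms of $w(M)$ where $M$ is the matching handed to us, and then an \emph{upper bound} on the optimum of the target problem, expressed in terms of $w(M^*)$ where $M^*$ is the relevant optimal matching; since $M$ is $\alpha$-approximate we have $w(M^*)\le \alpha\,w(M)$, so dividing the two bounds yields the stated factor. The workhorse for every lower bound will be Lemma~\ref{lem_matchingupper} with $T=\mathcal{S}$, which then reads: a perfect matching living inside a clique on $n$ nodes has weight at most $\tfrac{2}{n}$ times the total weight of that clique. The workhorse for every upper bound will be the classical $1$-factorization of the complete graph on an even number of vertices into (number of vertices $-1$) perfect matchings.

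\textbf{Parts 1 and 2} are then essentially parallel. For Max $k$-Sum, the reduction produces clusters $S_1,\dots,S_k$ of common size $n:=N/k$ (even, since $N/(2k)\in\mathbb{Z}$) by splitting $M$ into $k$ groups of $N/(2k)$ edges; applying Lemma~\ref{lem_matchingupper} to each clique $S_i$ (with $T=\mathcal{S}=S_i$ and the perfect matching $M\cap\binom{S_i}{2}$) gives $\sum_{x,y\in S_i}w(x,y)\ge \tfrac{n}{2}\,w(M\cap\binom{S_i}{2})$, and summing over $i$ the value of the clustering is at least $\tfrac{n}{2}w(M)$. For the upper bound, take the optimal partition $S_1^*,\dots,S_k^*$, $1$-factorize each $K_{S_i^*}$ into $n-1$ perfect matchings, and note that fixing a color $j$ and taking the union over $i$ of the $j$-th factors yields a perfect matching of $\mathcal{N}$, of weight at most $w(M^*)$; summing over the $n-1$ colors gives $OPT_{k\text{-sum}}\le (n-1)w(M^*)< n\,w(M^*)$. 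Dividing: $OPT_{k\text{-sum}}/\text{value}\le 2w(M^*)/w(M)\le 2\alpha$. For Densest $k$-Subgraph, $M$ is an $\alpha$-approximate $\tfrac{k}{2}$-matching with $k$-node support $S$ ($k$ even); Lemma~\ref{lem_matchingupper} applied once to $T=\mathcal{S}=S$ with matching $M$ gives $\sum_{x,y\in S}w(x,y)\ge \tfrac{k}{2}w(M)$, while $1$-factorizing $K_{S^*}$ on the optimal $k$-set $S^*$ into $k-1$ perfect matchings — each of which is itself a $\tfrac{k}{2}$-matching, hence of weight at most $w(M^*)$ — gives density$(S^*)\le (k-1)w(M^*)<k\,w(M^*)$, and dividing again gives $2\alpha$.

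\textbf{Part 3} needs an extra idea because we must build a full tour, not just read off a clique. Step (i): a Hamiltonian cycle on the (even) vertex set decomposes into two perfect matchings, so $OPT_{\text{TSP}}\le 2w(M^*)\le 2\alpha\,w(M)$. Step (ii): stitch the $m:=N/2$ edges of $M$ into one tour $e_1,f_1,e_2,f_2,\dots,e_m,f_m$ (indices mod $m$) by inserting $m$ connecting edges $f_i$, each joining a still-free endpoint of $e_i$ to a still-free endpoint of $e_{i+1}$; when we insert $f_i$ we choose the heavier of its (at most two) admissible realizations, and the triangle inequality on the endpoints of $e_{i+1}$ forces $w(f_i)\ge \tfrac12 w(e_{i+1})$ for all but at most two of the $f_i$ (the ones whose two endpoints are both already committed). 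Thus $w(F)\ge \tfrac12\bigl(w(M)-w(e')-w(e'')\bigr)$ for two of the matching edges $e',e''$; arranging (via undominated-edge choices, so that the procedure stays ordinal) for $e',e''$ to be among the lightest, $w(e')+w(e'')\le \tfrac{2}{m}w(M)=\tfrac{4}{N}w(M)$, whence $w(T)=w(M)+w(F)\ge (\tfrac{3}{2}-\tfrac{2}{N})w(M)$ and $OPT_{\text{TSP}}/w(T)\le \tfrac{2\alpha}{3/2-2/N}=\tfrac{4}{3-4/N}\alpha$. The point of using only undominated edges throughout is to keep every step implementable from ordinal information and to license the per-step triangle-inequality estimates.

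I expect the main obstacle to be Part 3, specifically the bookkeeping of free endpoints around the cycle: each $e_i$ has one endpoint committed when $e_{i-1}$ is attached and the other when $e_i$ is attached or the cycle is closed, and it is this accounting that determines exactly which (and how many) connecting edges fail the factor-$\tfrac12$ bound — hence whether the denominator is $3-2/N$ or $3-4/N$ — and also whether the whole stitching can be carried out using only undominated edges (so the reduction remains ordinal). Parts 1 and 2, by contrast, are routine once the $1$-factorization trick and Lemma~\ref{lem_matchingupper} are in place.
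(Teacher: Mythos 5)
Your Parts 1 and 2 follow essentially the same route as the paper: lower-bound the output via Lemma~\ref{lem_matchingupper} applied with $T=\mathcal{S}$ equal to each cluster (resp.\ the support $S$), and upper-bound the optimum by ``clique weight $\le$ (clique size) $\times$ (max matching weight on the clique).'' Your $1$-factorization argument is a clean explicit proof of that latter inequality, which the paper invokes as a corollary; the two arguments are interchangeable.

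Part 3 contains a genuine gap. Your stitching and the per-step triangle-inequality estimate $w(f_i)\ge\tfrac12 w(e_{i+1})$ are exactly the paper's Lemma~\ref{lem_tourcompletion} (and your bookkeeping worry resolves favorably: attaching the connectors in order around the cycle, only the single cycle-closing edge has both endpoints forced, so only one matching edge's contribution is lost --- the paper sidesteps even this by building a Hamiltonian path and completing it to a tour, since weights are nonnegative). The problem is the step ``arranging for $e',e''$ to be among the lightest.'' The edges of $M$ are vertex-disjoint, and ordinal input only lets you compare two edges sharing an endpoint (agent $x$ ranks $y$ versus $z$, which compares $w(x,y)$ with $w(x,z)$); for $m$ mutually top-ranked disjoint pairs, every ordering of the edge weights is consistent with the preferences, so no ordinal procedure can identify a lightest edge of $M$, and ``undominated-edge choices'' do not help here. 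Consequently your deterministic bound $w(e')+w(e'')\le\tfrac{2}{m}w(M)$ cannot be enforced. The paper's fix is to \emph{randomize} which matching edge plays the role of the lost edge $e_1$: each edge is lost with probability $\tfrac{1}{k}=\tfrac{2}{N}$, so the expected loss is $\tfrac{1}{2}\cdot\tfrac{2}{N}w(M)$ and one gets $E[w(T)]\ge(\tfrac32-\tfrac{2}{N})w(M)$, recovering the claimed factor in expectation. With that substitution your argument goes through; as written, it does not.
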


\begin{proof}
\textbf{(Part 1)} Suppose that we are provided a perfect matching $M$ that is a $\alpha$-approximation to the optimum matching. We use the following simple procedure to cluster the nodes into $k$-clusters:
\begin{enumerate}
\item Initialize $k$ empty clusters $S_1, S_2, \ldots, S_k$, and $M'=M$.
\item While $\exists$ some $S_i$ such that $|S_i| \neq \frac{N}{k}$
\item Pick some edge $e \in M'$, add both the end-points of $e$ to $S_i$, remove $e$ from $M'$.
\end{enumerate}
The only important property we require is that for every edge $e \in M$, both its end points belong to the same cluster. We now prove that this a $2\alpha$-approximation to the optimum max $k$-sum solution ($OPT = (O_1, \ldots, O_k)$). Let $M^*$ be the optimum perfect matching, and let $c=\frac{N}{k}$. Finally, since the end points of every edge in $M$ belong to the same cluster, without l.o.g, let $M_i$ denote the edges of $M$ that are present in the cluster $S_i$. First, we establish a lower bound on the quality of our solution $S$,

\begin{align*}w(S) & = \sum_{i=1}^k \sum_{(x,y) \in S_i} w(x,y) \\
& \geq \sum_{i=1}^k \frac{c}{2} w(M_k) & \text{(Lemma~\ref{lem_matchingupper})} \\
& = \frac{1}{2} cw(M).
\end{align*}

Now, we establish an upper bound for $OPT$ in terms of $w(M)$. Suppose that $M^*(O_i)$ is the maximum weight perfect matching on the set of nodes $O_i$. Then,
\begin{align*}
w(OPT) & = \sum_{i=1}^k \sum_{(x,y) \in O_i}w(x,y) \\
& \leq \sum_{i=1}^k c w(M^*(O_i)) & \text{(Corollary~\ref{corr_optlower})}\\
& \leq c w(\bigcup_{i=1}^k M^*(O_i)) \leq c w(M^*) &\\
& \leq c\alpha w(M).
\end{align*}
Reconciling the two bounds gives us the desired factor of $2\alpha$.\\

\textbf{(Part 2)} Once again, we use $M^*$ to denote the optimum $\frac{k}{2}$-matching and $M$ to denote the $\alpha$-approximation. Let $O$ be the optimum solution to the Densest $k$-subgraph problem for the given value of $k$. Then our algorithm simply returns the solution $S$ compromising of the endpoints of all the edges in $M$. The proof is quite similar to the proof for Part 1.

\begin{align*}
w(O) & = \sum_{(x,y) \in O}w(x,y) \leq kw(M^*(O)) \leq kw(M^*)\\
& \leq \alpha k w(M) \leq \alpha (2\sum_{(x,y) \in S}w(x,y)).
\end{align*}

\textbf{(Part 3)} Let $OPT$ denote the optimum solution to the Max TSP problem, and $M^*$, the maximum weight perfect matching. Then, it is not hard to see that $w(M^*)$ is at least $\frac{1}{2}w(OPT)$ since the sets consisting of only the odd or even edges from $OPT$ are perfect matchings. Now, as a first step towards showing our black-box result, we provide a very general procedure in the following lemma that takes as input any arbitrary matching $M$, and outputs a Hamiltonian path (tour minus one edge) $Q$ whose weight is at least $\frac{3}{2}w(M)$ minus a small factor that vanishes as $N$ increases.

\begin{lemma}
\label{lem_tourcompletion}
Given any matching $M$ with $k$ edges, there exists an efficient ordinal algorithm that computes a Hamiltonian path $Q$ containing $M$ such that the weight of the Hamiltonian path in expectation is at least
$$[\frac{3}{2} - \frac{1}{k}]w(M).$$
\end{lemma}
\begin{proof}
We first provide the algorithm, followed by its analysis. Suppose that $K$ is the set of nodes contained in $M$.
\begin{enumerate}
\item Select a node $i \in K$ uniformly at random. Suppose that $e(i)$ is the edge in $M$ containing $i$.
\item Initialize $Q=M$.
\item Order the edges in $M$ arbitrarily into $(e_1, e_2, \ldots, e_k)$ with the constraint that $e_1=e(i)$.
\item For $j=2$ to $k$,
\item Let $x$ be a node in $e_{j-1}$ having degree one in $Q$ (if $j=2$ choose $x \neq i$) and $e_j=(y,z)$.
\item If $y >_x z$, add $(x,y)$ to $Q$, else add $(x,z)$ to $Q$.
\end{enumerate}

Suppose that $Q(j)$ consists of the set of nodes in $Q$ for a given value of $j$ (at the end of that iteration of the algorithm). We claim that $w(Q) \geq \frac{3}{2}w(M) - w(e_1)$, which we prove using the following inductive hypothesis,
$$w(Q(j)) \geq w(M) + \sum_{r=2}^j \frac{1}{2}w(e_r)$$

Consider the base case when $j=2$. Suppose that $e_1=(i,a)$, and $e_2=(x,y)$. Without loss of generality, suppose that $a$ prefers $x$ to $y$, then in that iteration, we add $(a,x)$ to $Q$. By the triangle inequality, we also know that $w(x,y) \leq w(x,a) + w(y,a) \leq 2w(x,a)$. Therefore, at the end of that iteration, we have
\begin{equation}
\label{eqn_temptspbb}
w(Q) = w(M) + w(x,a) \geq w(M) + \frac{1}{2}w(e_2).
\end{equation}

The inductive step follows similarly. For some value of $j$, let $x$ be the degree one node in $e_{j-1}$, and $e_j=(y,z)$. Suppose that $x$ prefers $y$ to $z$, then using the same argument as above, we know that $(x,y)$ is added to our desired set and that $w(x,y) \geq \frac{1}{2}w(y,z)$. The claim follows in an almost similar fashion to Equation~\ref{eqn_temptspbb} and the inductive hypothesis.

In conclusion, the total weight of the path is $\frac{3}{2}w(M) - w(e_1)$. Since the first node $i$ is chosen uniformly at random, every edge in $M$ has an equal probability ($p=\frac{1}{k}$) of being $e_1$. So, in expectation, the weight of the tour is $\frac{3}{2}w(M) - \frac{1}{k}w(M)$, which completes the lemma.
\end{proof}

The rest of the proof for the black-box mechanism follows almost directly from the lemma. Suppose that $M$ is an $\alpha$-approximation to the optimum matching $M^*$. Then, applying the lemma we get a path $Q$ whose weight is at least $[\frac{3}{2}-\frac{2}{N}]w(M)$. We complete this path to form a tour $T$ getting,
\begin{align*}
w(T) & \geq [\frac{3}{2}-\frac{2}{N}]w(M) &\geq [\frac{3}{2} - \frac{2}{N}]\frac{w(M^*)}{\alpha}\\
& \geq [\frac{3}{4} - \frac{1}{N}]\frac{w(OPT)}{\alpha}.
\end{align*}
Some basic algebra yields the desired bound.
\end{proof}

Using the above black-box theorem and the results of the previous sections, we immediately obtain the following: a $3.2$-approximation algorithms for Max $k$-Sum, a $4$-approximation algorithm for Densest $k$-Subgraph, and a $2.14$-approximation algorithm for Max TSP.

\section{Conclusion}
In this paper we study ordinal algorithms, i.e., algorithms which are aware only of preference orderings instead of the hidden weights or utilities which generate such orderings. Perhaps surprisingly,  our results indicate that for many problems including Matching, Densest Subgraph, and Traveling Salesman, ordinal algorithms perform almost as well as the best algorithms which know the underlying metric weights. This indicates that for settings involving agents where it is expensive, or impossible, to obtain the true numerical weights or utilities, one can use ordinal mechanisms without much loss in welfare.

While many of our algorithms are randomized, and the quality guarantees are ``in expectation", similar techniques can be used to obtain weaker bounds for deterministic algorithms (bounds of 2 for Matching, and of 4 for the other problems considered). 
It may also be possible to improve the deterministic approximation factor for matching to be better than 2: although this seems to be a difficult problem which would require novel techniques, such an algorithm would immediately provide new deterministic algorithms for the other problems using our black-box reductions. Finally, it would be very interesting to see how well ordinal algorithms perform if the weights obeyed some structure other than the metric inequality.
\bibliography{bibliography}
\bibliographystyle{plain}

\appendix
\newpage
\section{Friendship Networks}
The classic theory of Structural Balance~\cite{davis1977clustering} argues that agents embedded in a social network must exhibit the property that `a friend of a friend is a friend'. This phenomenon has also been observed in many real-life networks (see for example~\cite{goodreau2009birds}). Mathematically, if we have an unweighted social network $G=(V,E)$ of (say) friendships, it is easy to check if this property holds. If for any $(i,j) \in E$, $(j,k)$ is also an edge, then $(i,k)$ must also belong to $E$. For this reason, this property has also been referred to as transitive or triadic closure.

What about weighted networks that capture the `intensity of friendships'? There is no obvious way as to how this property can be extended to weighted graphs without placing heavy constraints on the weights. For example, if we impose that for every edge $(i,j)$, and every node $k$, $w(i,k) \geq \min(w(i,j), w(j,k))$, we immediately condemn all triangles to be isosceles (w.r.t the weights). Instead, we argue that a more reasonable mathematical property that extends triadic closure to Weighted graphs is the following

\begin{definition}{($\alpha$-Weighted Friendship Property)}
Given a social network $G=(V,E,W)$, and a fixed parameter $\alpha \in [0,\frac{1}{2}]$, for every $(i,j,k)$: $w(i,k) \geq \alpha [w(i,j) + w(j,k)]$.
\end{definition}

In a nutshell, this property captures the idea that if $(i,j)$ is a `good edge', and $(j,k)$ is a good edge, then so is $(i,k)$. The parameter $\alpha$ gives us some flexibility on how stringently we can impose the property. Notice that in a sense, this property appears to be the opposite of the metric inequality, here we require that $w(i,k)$ is not too small compared to $w(i,j) + w(j,k)$. However, we show that this is not the case; in fact for every $\alpha \geq \frac{1}{3}$, any weighted graph that satisfies the $\alpha$-Weighted Friendship Property must also satisfy the metric inequality.

\begin{claim}
Suppose that $G=(V,E,W)$ is a weighted complete graph that satisfies the $\alpha$-Weighted Friendship Property for some $\alpha \in [\frac{1}{3}, \frac{1}{2}]$. Then, for every $(i,j,k)$, we have $w(i,j) \leq w(i,k) + w(j,k)$.
\end{claim}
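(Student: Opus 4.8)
The plan is to apply the $\alpha$-Weighted Friendship Property to the single triple $(i,j,k)$ in two carefully chosen ways and add the resulting inequalities. Write $a = w(j,k)$, $b = w(i,k)$, and $c = w(i,j)$; the goal is to show $c \le a + b$. Invoking the property with the intermediate vertex equal to $i$ (i.e.\ for the ordering $(j,i,k)$) gives $a \ge \alpha(b + c)$, and invoking it with the intermediate vertex equal to $j$ (i.e.\ for the ordering $(i,j,k)$) gives $b \ge \alpha(a + c)$. These are exactly the two instances whose left-hand sides are the two edges \emph{other than} $w(i,j)$.

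Next I would add the two inequalities to get $a + b \ge \alpha(a + b + 2c)$, i.e.\ $(1-\alpha)(a+b) \ge 2\alpha c$. Since $\alpha \le \tfrac12$ we have $1 - \alpha > 0$, and since $\alpha \ge \tfrac13$ we have $2\alpha \ge 1 - \alpha$; because the weights are nonnegative this yields $(1-\alpha)(a+b) \ge 2\alpha c \ge (1-\alpha)c$, and dividing by $1 - \alpha$ gives $a + b \ge c$, which is precisely $w(i,j) \le w(i,k) + w(j,k)$. Running the same argument on the other two labelings of the triple (or just appealing to the symmetry of the hypothesis in $i,j,k$) establishes all three triangle inequalities, so $w$ is metric and every such friendship network indeed falls within our framework.

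There is no deep obstacle here: the crux is simply recognizing which two instances of the property to combine, and observing that the hypothesis $\alpha \ge \tfrac13$ is exactly the condition making $\tfrac{2\alpha}{1-\alpha} \ge 1$. The one point worth stating explicitly is that the step $2\alpha c \ge (1-\alpha)c$ uses nonnegativity of the weights (valid in this setting); alternatively one can first divide by $1-\alpha$ and argue directly that $a+b \ge \tfrac{2\alpha}{1-\alpha}\,c \ge c$.
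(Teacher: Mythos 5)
Your proof is correct, and it reaches the conclusion by a slightly different (and arguably cleaner) combination of the same ingredient. The paper fixes an ordering first --- WLOG $w(i,j)$ is the heaviest edge and $w(i,k)\geq w(j,k)$ --- then applies the friendship property once to the smallest edge, $w(j,k)\geq\alpha[w(i,j)+w(i,k)]\geq\alpha w(i,j)+\alpha w(j,k)$, and concludes $w(i,j)\leq\frac{1-\alpha}{\alpha}w(j,k)\leq 2w(j,k)\leq w(i,k)+w(j,k)$. You instead apply the property symmetrically to both edges other than $w(i,j)$ and add, obtaining $(1-\alpha)(a+b)\geq 2\alpha c$ and then $a+b\geq c$ from $2\alpha\geq 1-\alpha$. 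What your version buys is the elimination of the two WLOG reductions: no need to identify the heaviest edge or to order $w(i,k)$ versus $w(j,k)$, and the role of the threshold $\alpha\geq\frac{1}{3}$ appears transparently as $\frac{2\alpha}{1-\alpha}\geq 1$ (equivalently, the paper's $\frac{1-\alpha}{\alpha}\leq 2$). Both arguments rely on nonnegativity of the weights --- you in the step $2\alpha c\geq(1-\alpha)c$, the paper in the step $\frac{1-\alpha}{\alpha}w(j,k)\leq 2w(j,k)$ --- and you are right to flag this explicitly; it is a standing (if implicit) assumption of the model. No gap.
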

\begin{proof}
Without loss of generality, it suffices to show the proof for the case where $w(i,j)$ is the heaviest edge in the triangle $(i,j,k)$. Now consider $w(i,k)$,$w(j,k)$ and without loss of generality, suppose that $w(i,k) \geq w(j,k)$. Then, since the $\alpha$-friendship property is obeyed, we have
$$w(j,k) \geq \alpha[w(i,j) + w(i,k)] \geq \alpha w(i,j) + \alpha w(j,k).$$

This gives us that $w(i,j) \leq \frac{1-\alpha}{\alpha}w(j,k)$. It is easy to verify that for $\alpha \in [\frac{1}{3}, \frac{1}{2}]$, the quantity $\frac{1-\alpha}{\alpha} \leq 2$. Therefore, we get
$$w(i,j) \leq 2w(j,k) \leq w(i,k) + w(j,k).$$

\end{proof}

\section{Odd Number of Agents: Extensions}
In many of our algorithms, we assumed that $N$ (the number of agents) is even for the sake of convenience and in order to capture our main ideas concisely without worrying about the boundary cases. Here, we show that all of our algorithms can be extended to the case where $N$ is odd or not divisible by $3$ with only minor modifications to the proofs and bounds obtained.

\subsubsection{Matching} We begin by arguing that of all our algorithms and proofs for matching hold even when $N$ is odd. First of all, it is not hard to see that our framework does not really depend on the parity of $N$ and the lemmas on the greedy and random techniques carry over to the case when $N$ is not even. In particular, note that in Lemma~\ref{lem_randomlower}, we had that $E[w(M_R)] \geq \frac{|M_R|}{|E|}\sum_{x,y \in \mathcal{N}}w(x,y)$, where $E$ is the total number of edges in the complete graph. When $N$ is odd, $|M_R| = \frac{N-1}{2}$, and $|E|$ is still $\frac{N(N-1)}{2}$. Therefore, we still get that $E(w(M_R)] \geq \frac{1}{N}\sum_{x,y \in \mathcal{N}}w(x,y).$

Next, we argue that our main $1.6$-algorithm still holds for arbitrary $N$ (not divisible by two and/or three) with an $\epsilon$ multiplicative error term that vanishes as $N \to \infty$. In Algorithm~\ref{alg_bestmatching}, when $N$ is not divisible by three, we may need to choose a matching with $\ceil{\frac{N}{3}}$ edges for $M_0$. Let $B$ be the largest matching outside of $M_0$, then $|B|=\floor{\frac{N}{6}}$. Then, the second sub-routine in Algorithm~\ref{alg_bestmatching} proceeds by selecting $|B|$ edges from $M_0$, and matching those nodes arbitrarily to the nodes in $B$. Ideally, we would like $\ceil{\frac{N}{3}}=|M_0| = 2|B|=2\floor{\frac{N}{6}}$. The worst case multiplicative error happens when $\ceil{\frac{N}{3}}$ is much larger than $2\floor{\frac{N}{6}}$; this happens when $N$ is odd, and has the form $3p+2$ for some positive integer $p$. With some basic algebra, we can show that the multiplicative error is at most $\frac{7}{8(2N-3)}$, which approaches zero as $N$ increases.

\subsubsection{Max $k$-Sum}
In the case of the black-box theorem, the $2\alpha$ reduction still holds if we modify the algorithm as follows when $\frac{N}{k}$ is odd. Instead of selecting the optimum perfect matching, we need to select a matching of size $\frac{1}{2}(\frac{N}{k}-1)*k = \frac{1}{2}(N - k)$,  assign every pair of matched nodes to the same cluster, and the unmatched nodes arbitrarily. The rest of the proof is the same. Now when we apply this black-box result, we can no longer invoke the $1.6$-approximation algorithm for perfect matchings and only use the $2$-approximation greedy algorithm for a matching that selects $\frac{N-k}{2}$ edges, and therefore, the black-box result only yields a $4$-approximation when $\frac{N}{k}$ is odd, but our main algorithm gives a $2$-approximation algorithm irrespective of its parity.

\subsubsection*{Densest $k$-Subgraph}
All of the proofs for the Densest $k$-subgraph hold. When $k$ is odd, we simply select a matching with $\floor{\frac{k}{2}}$ edges instead of $\frac{k}{2}$. The rest of the proof is exactly the same.

\subsubsection*{Max TSP}
During the proofs for Max TSP, we extensively make use of the fact that $w(M^*) \geq \frac{w(T^*)}{2}$, where $M^*$ is the optimum matching, and $T^*$ is the optimal tour. This may no longer be true when $N$ is odd. However, suppose that $M^*_f$ is the optimum fractional matching, it is not hard to verify that $w(M^*_f) \geq \frac{w(T^*)}{2}$; this follows from taking $T^*$ and choosing each edge with probability $\frac{1}{2}$. Moreover, observe that all of our proofs in this paper for the greedy algorithm (namely Lemma~\ref{lem_matchinggreedy}) are true when we compare the solution to the optimum fractional matching of a given size. Therefore, the black-box result itself carries over.

\end{document}